\newcommand{\nn}{\nonumber}
\newcommand{\be}{\begin{equation}}
\newcommand{\ee}{\end{equation}}
\newcommand{\bea}{\begin{eqnarray}}
\newcommand{\eea}{\end{eqnarray}}
\newcommand{\bse}{\begin{subequations}}
\newcommand{\ese}{\end{subequations}}
\newcommand{\vf}{\varphi}
\DeclareMathAccent{\wtilde}{\mathord}{largesymbols}{"65}
\DeclareMathAccent{\what}{\mathord}{largesymbols}{"62}
\def\m@th{\mathsurround=0pt}
\mathchardef\bracell="0365
\def\upbrall{$\m@th\bracell$}
\def\undertilde#1{\mathop{\vtop{\ialign{##\crcr
    $\hfil\displaystyle{#1}\hfil$\crcr
     \noalign
     {\kern1.5pt\nointerlineskip}
     \upbrall\crcr\noalign{\kern1pt
   }}}}\limits}
\newcommand{\wb}[1]{\overline{#1}}
\newcommand{\ub}[1]{\underline{#1}}
\newcommand{\wh}{\widehat}
\newcommand{\wt}{\widetilde}
\newcommand{\ut}{\undertilde}
\newcommand{\uh}{\underhat} 
\def\hypohat#1#2{\vrule depth #1 pt width 0pt{\smash{{\mathop{#2}
\limits_{\displaystyle\widehat{}}}}}}
\numberwithin{equation}{section} 
\newtheorem{theorem}{\bf Theorem}[section]
\numberwithin{equation}{section}
\def\undertilde#1{\mathord{\vtop{\ialign{##\crcr
$\hfil\displaystyle{#1}\hfil$\crcr\noalign{\kern1.5pt\nointerlineskip}
$\hfil\widetilde{}\hfil$\crcr\noalign{\kern-6.5pt}}}}}
\def\underhat#1{\mathord{\vtop{\ialign{##\crcr
$\hfil\displaystyle{#1}\hfil$\crcr\noalign{\kern1.5pt\nointerlineskip}
$\hfil\widehat{}\hfil$\crcr\noalign{\kern-6.5pt}}}}}
\def\underbar#1{\mathord{\vtop{\ialign{##\crcr
$\hfil\displaystyle{#1}\hfil$\crcr\noalign{\kern1.5pt\nointerlineskip}
$\hfil\bar{}\hfil$\crcr\noalign{\kern-6.5pt}}}}}
\def\wt#1{\widetilde{#1}}
\def\wh#1{\widehat{#1}}
\def\wb#1{\bar{#1}}
\def\ut#1{\undertilde{#1}}
\def\uh#1{\underhat{#1}}
\def\ub#1{\underbar{#1}}
\newcommand{\copyrightnote}[2]{{\renewcommand{\thefootnote}{}
 \footnotetext{\small\it
\begin{flushleft}
 \copyright \ #1   #2  
\end{flushleft}}}}
\newcommand{\Name}[1]{\begin{flushleft}
                       \LARGE \bf #1
                       \end{flushleft}\vspace{-3mm}}
\newcommand{\Author}[1]{\begin{flushleft}
                       \it #1 \end{flushleft}}
\newcommand{\Address}[1]{\begin{flushleft}
                       \it #1 \end{flushleft}}
\newcommand{\Date}[1]{\begin{flushleft}
                      \small  \it #1 \end{flushleft}}
\newcommand{\evenhead}{Author \ name}
\newcommand{\oddhead}{Article \ name}
\renewcommand{\@evenhead}{
\hspace*{-3pt}\raisebox{-15pt}[\headheight][0pt]{\vbox{\hbox to \textwidth
{\thepage \hfil \evenhead}\vskip4pt \hrule}}}
\renewcommand{\@oddhead}{
\hspace*{-3pt}\raisebox{-15pt}[\headheight][0pt]{\vbox{\hbox to \textwidth
{\oddhead \hfil \thepage}\vskip4pt\hrule}}}
\renewcommand{\@evenfoot}{}
\renewcommand{\@oddfoot}{}
\long\def\@makecaption#1#2{%
  \vskip\abovecaptionskip
  \sbox\@tempboxa{\small \textbf{#1.}\ \ #2}%
  \ifdim \wd\@tempboxa >\hsize
    {\small \textbf{#1.}\ \ #2}\par
  \else
    \global \@minipagefalse
    \hb@xt@\hsize{\hfil\box\@tempboxa\hfil}%
  \fi
  \vskip\belowcaptionskip}
\newcommand{\JNMPnumberwithin}[3][\arabic]{%
  \@ifundefined{c@#2}{\@nocounterr{#2}}{%
    \@ifundefined{c@#3}{\@nocnterr{#3}}{%
      \@addtoreset{#2}{#3}%
      \@xp\xdef\csname the#2\endcsname{%
        \@xp\@nx\csname the#3\endcsname .\@nx#1{#2}}}}%
}
\renewenvironment{proof}[1][\proofname]{\par
  \normalfont
  \topsep6\p@\@plus6\p@ \trivlist
  \item[\hskip\labelsep\textbf{%
    #1\@addpunct{.}}]\ignorespaces
}{%
  \qed\endtrivlist
}
\newcommand{\resetfootnoterule} {
  \renewcommand\footnoterule{%
  \kern-3\p@
  \hrule\@width.4\columnwidth
  \kern2.6\p@}
}
\renewcommand{\footnoterule}{}
\theoremstyle{definition}
\begin{document}

\renewcommand{\evenhead}{ {\LARGE\textcolor{blue!10!black!40!green}{{\sf \ \ \ ]ocnmp[}}}\strut\hfill 
F W Nijhoff
}
\renewcommand{\oddhead}{ {\LARGE\textcolor{blue!10!black!40!green}{{\sf ]ocnmp[}}}\ \ \ \ \  
Lagrangian multiforms: discrete and semi-discrete KP systems
}

\thispagestyle{empty}
\newcommand{\FistPageHead}[3]{
\begin{flushleft}
\raisebox{8mm}[0pt][0pt]
{\footnotesize \sf
\parbox{150mm}{{Open Communications in Nonlinear Mathematical Physics}\ \ \ \ {\LARGE\textcolor{blue!10!black!40!green}{]ocnmp[}}
\quad Special Issue 2, 2024\ \  pp
#2\hfill {\sc #3}}}\vspace{-13mm}
\end{flushleft}}

\FistPageHead{1}{\pageref{firstpage}--\pageref{lastpage}}{ \ \ }

\strut\hfill

\strut\hfill

\copyrightnote{The author(s). Distributed under a Creative Commons Attribution 4.0 International License}

\begin{center}

{\bf {\large Proceedings of the OCNMP-2024 Conference:\\ 

\smallskip

Bad Ems, 23-29 June 2024}}
\end{center}

\smallskip

\Name{Lagrangian multiform structure of discrete and semi-discrete KP systems}

\Author{F.W. Nijhoff$^{\dagger\ddagger}$}

\Address{$\dagger$ School of Mathematics, University of Leeds, Leeds LS2 9JT, UK\\
$\ddagger$ Department of Mathematics, Shanghai University, Shanghai 200444, PR China}

\Date{Received June 21, 2024; Accepted July 1, 2024}

\setcounter{equation}{0}

\begin{abstract}

\noindent 
A variational structure for the potential AKP system is 
established using the novel formalism of a Lagrangian multiforms. 
The structure comprises not only the fully discrete equation on the 
3D lattice, but also its semi-discrete variants including several 
differential-difference equations asssociated with, and 
compatible with, the partial difference equation. To this end,    
an overview is given of the various (discrete and semi-discrete) 
variants of the KP system, and their associated Lax representations, 
including a novel `generating PDE' for the KP hierarchy. 
The exterior derivative of the Lagrangian 3-form for the lattice potential KP 
equation is shown to exhibit a double-zero structure, which implies the 
corresponding generalised Euler-Lagrange equations.  
Alongside the 3-form structures, we develop a variational formulation of 
the corresponding Lax systems via the square eigenfunction representation 
arising from the relevant direct linearization scheme. 
\end{abstract}

\label{firstpage}


\section{Introduction}

The notion of Lagrangian multiforms was introduced in \cite{LobbNij2009} to provide a variational formalism 
for systems integrable in the sense of multidimensional consistency (MDC). 
Thus, the multiform theory distinguishes itself from conventional variational approaches  
by the feature that the corresponding Euler-Lagrange equations produce not just a single 
equation per component of the field variable, but a compatible system of equations on the same dependent variable in a multidimensional space of independent variables.  
The corresponding action is a functional of not only the field variables, but also of 
the $d$-dimensional hypersurfaces over which a Lagrangian $d$-form is integrated 
in an ambient space of arbitrary dimension. 

Initially set up for integrable systems in 1+1 dimensions (Lagrangian 2-forms) and 1+0 dimensions 
(Lagrangian 1-forms, cf. \cite{Y-KLN}), the first example of a 2+1-dimensional/3-dimensional case   
(Lagrangian 3-forms) was established in the fully discrete case in  \cite{LNQ}, and in the fully continuous 
case in \cite{SNC21}, of the KP system. Semi-discrete KP systems were so far not covered, and the fully 
discrete case was mainly related to the bilinear form of KP due to Hirota, \cite{Hir81}, but not the more natural 
nonlinear (potential) KP form first given in \cite{NCWQ84} 
which reads 
\be\label{eq:potDDDKP} 
(p-q+\wh{u}-\wt{u})(r+\wh{\wt{u}}) +(q-r+\wb{u}-\wh{u})(p+\wh{\wb{u}}) 
+ (r-p+\wt{u}-\wb{u})(q+\wt{\wb{u}})=0\ . 
\ee
Here the dependent variable $u=u(n,m,h)$ depends on discrete lattice 
variables $n,m,h$ labelling a three-dimensional lattice, and $p,q,r$ 
are associated lattice parameters which one can associate with the 
links on the lattice in $n$,$m$, and $h$-directions respectively. 
The accents $\wt{\phantom{a}}$~,~$\wh{\phantom{a}}$~,~$\wb{\phantom{a}}$~ 
denote elementary lattice shifts in the three lattice directions, i.e. 
$\wt{u}=T_pu=u(n+1,m,h)$, $\wh{u}=T_qu=u(n,m+1,h)$, $\wb{u}=T_ru=u(n,m,h+1)$ with shift operators $T_p,T_q,T_r$ in the $n$, $m$, and $h$-directions\footnote{The notation of the shift operator 
$T_p$ as acting on the variable $n$, etc., should not be confused with a 
shift in the parameter $p$.}. 
Eq. \eqref{eq:potDDDKP} can be shown to 
lead to the potential KP equation in a specific full continuum limit, 
yielding 
\be\label{eq:potKP} 
\partial_x\left(u_t-\tfrac{1}{4}u_{xxx}-\tfrac{3}{2}u_x^2\right)= 
\tfrac{3}{4}u_{yy}\ ,  
\ee 
for the same function $u$ but in terms of appropriate continuous 
variables $x,y,t$. 

It is well-known, from its construction in \cite{NCWQ84} as well as through the classification problem 
addressed in \cite{ABS12}, that \eqref{eq:potDDDKP} is multidimensionally 
consistent, which implies that the lattice can be extended to a multidimensional 
lattice of arbitrary dimension in which on each elementary cube the equation 
can be imposed and that all equations on all faces of elementary hypercubes are consistent. 
It is this fundamental integrability property that characterises many 
integrable lattice systems, and that we aim at capturing in the variational formalism 
of Lagrangian multiforms.

In this note I will first give a brief review of the discrete, semi-discrete 
KP systems associated with \eqref{eq:potDDDKP}, altogether forming a 
large mixed MDC integrable system, and I will present the corresponding 
linear systems (i.e., Lax representations). I will also summarise the 
direct linearising transform approach, cf. \cite{NCWQ84,FN16}, as it 
is linked to the `square eigenfunction' expansions that is needed 
for the variational description of the Lax pair, noting that 
the multiform description as a byproduct also yields a variational 
formulation of not only the nonlinear equation under consideration, but 
also of the Lax pairs, cf. \cite{SNC20}. There are several distinct semi-
discrete forms appearing, namely a form with two discrete variables 
and one continuous variable $\xi$, cf. \eqref{eq:DddKP} below, and 
various differential-difference equations with two continuous variables, 
$\xi$ and $\sigma$ or $\tau$, and one discrete variable $n$ or $m$, namely \eqref{eq:DDdKP} below. In addition there are other semi-discrete 
equations with other combinations of (discrete or continuous) independent 
variables, such as \eqref{eq:DDDdKP} and \eqref{eq:uextras}, where 
$v=p-q+\wh{u}-\wt{u}$. Importantly, all these equations are not 
separate formulae, but can be imposed \textit{simultaneously} on the single 
dependent variable function $u=u(n,m,h; \xi,\sigma,\tau)$ because of the 
important property of \textit{multidimensional consistency} that 
makes all these equations mutually compatible.  Thus, one could consider 
the entire system of all these variant equations as the object that 
should be considered the `KP system' , which includes also the standard 
KP hierarchy. The continuous variables $\sigma$, $\tau$ can be identified 
with the so-called Miwa variables, \cite{Miwa82}, by means of 
\be\label{eq:Miwa} \partial_\tau u=\sum_{j=0}^\infty \frac{1}{p^{j+1}}\frac{\partial}{\partial t_{j}}\ , \quad 
\partial_\sigma u=\sum_{j=0}^\infty \frac{1}{q^{j+1}}\frac{\partial}{\partial t_{j}}\ ,
  \ee 
where the $t_j$, $j=1,2, \cdots$ are the usual time-variables of the KP hierarchy. As a novel result, a fully continuous coupled PDE system (in terms of $u$ and $v$), \eqref{eq:uvPDE} 
together with \eqref{eq:digamma}, is presented with independent variables $\xi,\tau,\sigma$ 
which encodes the entire KP hierarchy. This is what we would call a 
\textit{generating PDE} for the hierarchy. This system allows us to circumvent the rather laborious representation of the KP hierarchy through pseudo-differential operators, since it is closed-form PDE 
system, and only requires the expansions \eqref{eq:Miwa} to unpeel 
the equation in order to obtain the KP hierarchy in terms of the usual time-variables.   
 
The outline of the paper is as follows. In section 2, we summarise the KP system and present the corresponding linear problems in terms of 
auxiliary functions $\vf$ and $\psi$ (the latter solving an adjoint 
Lax pair). In section 3 we present the direct linearising transform 
and the corresponding quadratic eigenfunction expansion. In section 4 
we give the Lagrangian multiform for the semidiscrete KP, and derive 
from it the one for the fully discrete case. In particular, we show 
that the latter possesses the double zero property, which 
guarantees not only the closure relation, but also the relevant 
Euler-Lagrange system in terms of the so-called corner equations. 
Finally, in the Discussion section we mention some future challenges.  

\section{The discrete, semi-discrete and continuous KP system} 

Apart from \eqref{eq:potDDDKP}, we are interested in the Lagrangian structure of the following 
integrable equation. cf. \cite{NCW85},  
\be\label{eq:DddKP} 
\partial_\xi\ln(p-q+\wh{u}-\wt{u})=\wh{\wt{u}}+u-\wh{u}-\wt{u}\  , 
\ee 
which arises as a so-called straight continuum limit\footnote{For this terminology 
we refer to \cite{HJN16}, Ch. 5.} of \eqref{eq:potDDDKP}. 
Eq. \eqref{eq:DddKP} is a semi-discrete version of the KP equation with two discrete 
independent variables $n,m$ and one continuous independent variable $\xi$. 
As mentioned before, the scalar field $u$ can, in principle, be considered to be a function depending on an arbitrary number of discrete variables (with the notation introduced 
in section 1) and possesses, like \eqref{eq:potDDDKP}, the MDC property. 
This equation was used in \cite{NijPeng} to derive a discrete-time version of the 
Calogero-Moser system by pole reduction, and was considered also in \cite{Y-KLN} 
as a starting point for the first occurrence of a Lagrangian 1-form structure, namely for both the 
discrete- and continuous-time Calogero-Moser (CM) systems, but an initial obstacle in that paper was 
that, while we established a variational description of both the discrete and continuous 
CM hierarchy, we didn't have a Lagrangian description of the unreduced equation, namely 
\eqref{eq:DddKP} itself.  This gap in the treatment we want to address here, alongside the 
establishment of the Lagrange structure for the fully discrete KP system. 

The system defined by \eqref{eq:DddKP}, extended in all lattice directions, is multidimensionally consistent, subject to \eqref{eq:potDDDKP} itself, which for 
latter convenience we can also cast in the equivalent form\footnote{Note that \eqref{eq:dddKP} is just one single 
equation, even though written in two alternative ways.}:  
\be\label{eq:dddKP} 
\frac{p-r+\wh{\wb{u}}-\wh{\wt{u}}}{p-r+\wb{u}-\wt{u}}= \frac{q-r+\wt{\wb{u}}-\wh{\wt{u}}}{q-r+\wb{u}-\wh{u}}
=\frac{p-q+\wh{\wb{u}}-\wt{\wb{u}}}{p-q+\wh{u}-\wt{u}}\  , 
\ee
Identifying \eqref{eq:dddKP} as a discrete potential KP equation, by virtue of its link to 
\eqref{eq:potKP}, a non-potential version of this equation was derived in 
\cite{FN16}, cf. also \cite{Nimmo} for alternative versions of the latter. As mentioned 
before, \eqref{eq:dddKP} is itself 
multidimensionally consistent in terms of consistency on the tessaract, or more precisely on the 
octahedral lattice (cf. \cite{ABS12} for a detailed discussion).  

Both \eqref{eq:dddKP} and \eqref{eq:DddKP} form part of a broader structure, which is revealed through the corresponding linear problems and further extensions.  First, we note 
that eq. \eqref{eq:DddKP} arises as the compatibility condition from the following Lax pair 
\bse \label{eq:vfLax}\begin{align}
& \wt{\vf}=\vf_\xi+(p+u-\wt{u})\vf \label{eq:vfLaxa} \\ 
& \wh{\vf}=\vf_\xi+(q+u-\wh{u})\vf  \label{eq:vfLab}
\end{align}
\ese 
or alternatively from the adjoint Lax pair 
\bse \label{eq:psiLax}\begin{align}
& \psi =-\wt{\psi}_\xi+\wt{\psi}(p+u-\wt{u}) \label{eq:psiLaxa} \\ 
& \psi =-\wh{\psi}_\xi+\wh{\psi}(q+u-\wh{u}) \label{eq:psiLaxb}
\end{align}
\ese 
whereas the fully discrete equation arises from the (inhomogeneous) Lax triplets
\bse \label{eq:ddvfLax}\begin{align}
& \wt{\vf}=(p-\wt{u})\vf+\chi\ , \\ 
& \wh{\vf}=(q-\wh{u})\vf+\chi\ , \\ 
& \wb{\vf}=(r-\wb{u})\vf+\chi\ , 
\end{align}
\ese 
or from the adjoint triplet 
 \bse \label{eq:ddpsiLax}\begin{align}
& \psi =\wt{\psi}(p+u)-\wt{\theta}\ , \\ 
& \psi =\wh{\psi}(q+u)-\wh{\theta}\ , \\ 
& \psi =\wb{\psi}(r+u)-\wb{\theta}\ ,  
\end{align}
\ese 
where $\chi$ and $\theta$ are some auxiliary fields that can be pairwise eliminated from \eqref{eq:ddvfLax} and 
\eqref{eq:ddpsiLax} to yield Lax pairs in a more conventional (homogeneous linear) form.

Furthermore we have the following single-shift semi-continuous KP equations\footnote{Here and in what follows, the under-accents ~$\ut{\cdot}$~ and ~$\uh{\cdot}$~ 
denote the backward shifts to the shifts ~$\wt{\cdot}$~ and ~$\wh{\cdot}$~ respectively, i.e. $\ut{u}=T_p^{-1}u=u(n-1,m,h)$ and $\uh{u}=T_q^{-1}u=u(n,m-1,h)$. Similarly, we also have $\ub{u}=T_r^{-1}u=u(n,m,h-1)$. } 
\bse\label{eq:DDdKP}\begin{align}
&\partial_\xi\ln(1+u_\tau) = \wt{u}+\ut{u}-2u \  , \\ 
&\partial_\xi\ln(1+u_\sigma) = \wh{u}+\uh{u}-2u \  ,
\end{align}
\ese 
which involve the continuous Miwa variables $\tau$ and $\sigma$ of \eqref{eq:Miwa},
which are associated with the lattice parameters 
$p$ and $q$, and hence with the lattice shifts $T_p$ and $T_q$ respectively (and similar equations associated 
with any of the lattice directions in the system). Eqs. \eqref{eq:DDdKP}, which are closely connected to the 2D Toda equation, can be obtained from a `skew continuum limit' 
(again in the terminology of Ch. 5 of \cite{HJN16}) from \eqref{eq:dddKP}, but they 
arise also as the compatibility conditions 
of \eqref{eq:vfLax} together with 
\bse\label{eq:DdvfLax}\begin{align}
&\vf_\tau = -(1+u_\tau)\ut{\vf}\  , \label{eq:DdvfLaxa} \\ 
&\vf_\sigma = -(1+u_\sigma)\uh{\vf}\  , \label{eq:DdvfLaxb}
\end{align}
\ese 
or from the adjoint linear equations 
\bse\label{eq:DdpsiLax}\begin{align}
&\psi_\tau = (1+u_\tau)\wt{\psi}\  , \label{eq:DdpsiLaxa} \\ 
&\psi_\sigma = (1+u_\sigma)\wh{\psi}\  , \label{eq:DdpsiLaxb}
\end{align}
\ese 
together with \eqref{eq:psiLax} respectively. 
From these linear systems we can derive the following purely continuous Lax pairs: 
\bse\label{DDDLax}
\begin{align}
&\vf_{\xi\tau}+(p+u-\wt{u})\vf_\tau+(1+u_\tau)\vf=0 \ , \\ 
&\vf_{\xi\sigma}+(q+u-\wh{u})\vf_\sigma+(1+u_\sigma)\vf=0 \ , \\ 
&v\,\vf_{\sigma\tau}=(1+\wt{u}_\sigma)\vf_\tau-(1+\wh{u}_\tau)\vf_\sigma\  , 
\end{align}\ese
where we have set $v=p-q+\wh{u}-\wt{u}$. The analogous relations for the 
adjoint function $\psi$ are 
\bse\label{DDDLaxadj}
\begin{align}
&\psi_{\xi\tau}-(p+\ut{u}-u)\psi_\tau+(1+u_\tau)\psi=0 \ , \\ 
&\psi_{\xi\sigma}-(q+\uh{u}-u)\psi_\sigma+(1+u_\sigma)\psi=0 \ , \\ 
&\uh{\ut{v}}\,\psi_{\sigma\tau}=(1+\uh{u}_\tau)\psi_\sigma-(1+{\ut{u}}\!\!\!\phantom{a}_\sigma)\psi_\tau\  . 
\end{align}\ese
While \eqref{eq:DDdKP} involve the variable $\xi$, we also have relations 
that only involve the continuous variables $\tau$ and $\sigma$, namely 
\begin{subequations}\label{eq:DDDdKP}\begin{eqnarray}
v\,u_{\sigma\tau}&=&(1+u_\tau)(1+\wt{u}_\sigma)-(1+u_\sigma)(1+\wh{u}_\tau)\  ,  \\ 
\uh{\ut{v}}\,u_{\sigma\tau}&=&(1+u_\sigma)(1+\uh{u}_\tau)-(1+u_\tau)(1+\ut{u}_\sigma)\  , 
\end{eqnarray} \end{subequations}  
which complements \eqref{eq:DDdKP}. 
The following relations on the variable $u$ also hold true
\bse\label{eq:uextras}\begin{align}
&\frac{1+\wh{u}_\tau}{1+u_\tau}= \frac{v}{\ut{v}}\ , \quad 
\frac{1+\wt{u}_\sigma}{1+u_\sigma}= \frac{v}{\uh{v}}\ ,  \label{eq:uextraa}\\  
& \frac{u_{\sigma\tau}}{(1+u_\sigma)(1+u_\tau)} 
= \frac{1}{\uh{v}}-\frac{1}{\ut{v}}\  , \label{eq:uextrab} 
\end{align}
\ese
and which arise from the Lax relations \eqref{eq:ddvfLax} and 
\eqref{eq:DdvfLax}, or equivalently \eqref{eq:ddpsiLax} and 
\eqref{eq:DdpsiLax}. From the latter relations, we can derive a fully 
continuous KP system, i.e. a coupled PDE system for $u$ and $v$ 
in terms of the independent variables $\xi$, $\sigma$ and $\tau$, 
eliminating all the lattice shifts. In fact, from \eqref{eq:uextrab} 
using \eqref{eq:DddKP} we can derive the relation
\begin{align} 
\frac{1}{\uh{v}}+\frac{1}{\ut{v}} = 
\frac{\textstyle 4+ \frac{u_{\sigma\tau}}{(1+u_\sigma)(1+u_\tau)} 
\partial_\xi\ln\left( \frac{u^2_{\sigma\tau}}{(1+u_\sigma)(1+u_\tau)} \right)}{\textstyle 2v+\partial_\xi\ln
\left(\frac{1+u_\tau}{1+u_\sigma}\right)}=: \digamma  \label{eq:digamma} 
\end{align} 
where the expression on the r.h.s., which we call $\digamma$, 
only depends on $v$ and derivatives of $u$, but no longer involving any 
shifts. Combining \eqref{eq:digamma} with \eqref{eq:uextrab} we can 
solve for $\uh{v}$ and $\ut{v}$ in terms of $v$ and $u$ (and its derivatives) 
and then use the relation 
\bse \label{eq:genPDEKP} 
\begin{align}  \label{eq:vvuhPDE}
\partial_\xi\ln\uh{v}=\uh{v}-v+\partial_\xi \ln(1+u_\sigma)\ , 
\end{align} 
or the the alternative relation\footnote{That these two relations are equivalent 
follows from the relation 
\begin{align*} 
\partial_\xi\ln\left(\frac{\ut{v}}{\uh{v}}\right)=2v-\ut{v}-\uh{v}
+\partial_\xi\ln \left(\frac{1+u_\tau}{1+u_\sigma} \right)= 
v+\ut{{\hypohat 3 v}}-\uh{v}-\ut{v}\  ,  
\end{align*} 
which is a consequence of \eqref{eq:DddKP} and where is used that 
\begin{align*} 
 \partial_\xi\ln \left(\frac{1+u_\tau}{1+u_\sigma} \right)=\ut{{\hypohat 3 v}}-v\  , 
\end{align*}
which in turn is a consequence of \eqref{eq:DDdKP}. } 
\begin{align} \label{eq:vvutPDE}
\partial_\xi\ln\ut{v}=-\ut{v}+v+\partial_\xi \ln(1+u_\tau)\ , 
\end{align}\ese 
and insert the expression for $\uh{v}$ or $\ut{v}$ respectively obtained 
from \eqref{eq:digamma} and \eqref{eq:uextrab}, and obtain a PDE 
involving $v$, $v_\xi$ and the derivatives of $u$ up to 
$u_{\xi\xi\sigma\tau}$, which reads 
\bse \label{eq:uvPDE} 
\begin{align} \label{eq:uPDEa} 
v & =\partial_\xi \ln\left((1+u_\sigma)\digamma+\frac{u_{\sigma\tau}}{1+u_\tau}
\right)+\frac{2}{\textstyle \digamma+
\frac{u_{\sigma\tau}}{(1+u_\tau)(1+u_\sigma)}}\ , \\ 
& =- \partial_\xi \ln\left((1+u_\tau)\digamma-\frac{u_{\sigma\tau}}{1+u_\sigma}
\right)+\frac{2}{\textstyle \digamma-
\frac{u_{\sigma\tau}}{(1+u_\tau)(1+u_\sigma)}}\ ,\label{eq:uPDEb} 
\end{align}
To complement this relation, we also have from \eqref{eq:uextraa} the PDE: 
\begin{align} \label{eq:vPDE} 
2v_{\sigma\tau}=& \partial_\sigma\left[ (1+u_\tau)v\left(\digamma-
\frac{u_{\sigma\tau}}{(1+u_\tau)(1+u_\sigma)}\right) \right] \nonumber \\ 
& \quad -\partial_\tau\left[ (1+u_\sigma)v\left(\digamma+
\frac{u_{\sigma\tau}}{(1+u_\tau)(1+u_\sigma)}\right) \right]\  ,   
\end{align} 
\ese 
which together with \eqref{eq:uPDEa} or \eqref{eq:uPDEb}, inserting the expression \eqref{eq:digamma} 
for $\digamma$, forms a coupled system of PDEs for $u$ and 
$v$. If one were to eliminate $v$ from this system, and obtain a single PDE 
in terms of $u$, the highest derivative would be $u_{\xi\xi\sigma\sigma\tau\tau}$. 
We will refer to this coupled system \eqref{eq:uvPDE} of PDEs as the 
\textit{generating PDE of the KP hierarchy}, as by inserting the power series 
expansions \eqref{eq:Miwa} it should contain all equations in the KP hierarchy 
that can be obtained from the Sato scheme, \cite{Sato}.   

\section{$\tau$-function relations} 

The various relations for the function $u$ can be resolved using the $\tau$-function 
$f$. In fact, we have the following relations: 
\bse\label{eq:taurels} 
\begin{align}
& u=\partial_\xi\ln f\  , \\ 
& p-q+\wh{u}-\wt{u}=(p-q)\frac{f\,\wh{\wt{f}}}{\wh{f}\,\wt{f}}\  , \\ 
& 1+u_\tau= \frac{\wt{f}\,\ut{f}}{f^2}\  , \\ 
& 1+u_{\sigma}=\frac{\wh{f}\,\uh{f}}{f^2}\  ,
\end{align} 
\ese 
These identification reduce \eqref{eq:DDDdKP}, \eqref{eq:DDdKP} 
and \eqref{eq:DddKP} to identities
The equations for the $\tau$-function comprise the Hirota-Miwa equation 
\begin{equation}\label{eq:HM}.  
(p-q) \wb{f}\,\wh{\wt{f}} + (q-r) \wt{f}\,\wh{\wb{f}}+(r-p) \wh{f}\,\wt{\wb{f}}=0\ , 
\end{equation} 
as well as the Hirota-type bilinear equations 
\bse\label{eq:HMsgtaurel} 
\begin{align}
& (p-q) \left(f_\sigma \wt{f}-f \wt{f}_\sigma \right)=f\,\wt{f} - \wh{f}\,\uh{\wt{f}}\ , \\  
& (q-p) \left(f_\tau \wh{f}-f \wh{f}_\tau \right)=f\,\wh{f} - \wt{f}\,\ut{\wh{f}}\ , 
\end{align} 
\ese 
which can be ontained from \eqref{eq:HM} by performing a skew continuum limit (in the sense 
of Ch. 5 of \cite{HJN16}). In fact, this can be viewed as the following 
remarkable relation connecting the derivatives with regard to the Miwa variables 
and the lattice shifts on $\tau$-function, namely 
\begin{equation}\label{eq:remarkable}
\partial_\tau f=-\left(T_p^{-1} \frac{\rm d}{{\rm d}p} T_p\right) f\ , \quad 
\partial_\sigma f=-\left(T_q^{-1} \frac{\rm d}{{\rm d}q} T_q\right) f\ . 
\end{equation} 
We also have the relation 
\begin{equation}\label{eq:usigmatau}
(p-q)u_{\sigma\tau}=\frac{\wt{\uh{f}}\,\ut{f}\,\wh{f}-\wh{\ut{f}}\,\wt{f}\,\uh{f}}{f^3}\  . 
\end{equation}
Finally, there is the Toda type differential-difference equation
\begin{equation}\label{eq:MiwaToda} 
(p-q)^2\left( f\,f_{\sigma\tau}-f_\sigma f_\tau\right)= \wh{\ut{f}}\,\wt{\uh{f}}-f^2\  . 
\end{equation} 
Note that a non-autonomous version of \eqref{eq:MiwaToda} was presented in \cite{FN20}. 

A special solution to the KP system, in fact the pole reduction we studied in 
\cite{NijPeng,Y-KLN}, is obtained in terms of the $\tau$-function by setting 
\begin{equation}\label{eq:fpole}
f=\prod_{j=1}^N (\xi-x_j)\quad \Rightarrow\quad u=\sum_{j=1}^N \frac{1}{\xi-x_j}\  , 
\end{equation} 
where the zeroes, respectively the poles, $x_j$ are functions of the discrete variables $n,m,\dots$ as well as of the 
continuous variables $\sigma,\tau,\dots$, but not of $\xi$. By using the Lagrange interpolation formulae 
\bse\label{eq:Lagrinterpol} \begin{align}
&\frac{Y(\xi)}{X(\xi)}=1+ \sum_{j=1}^N \frac{1}{\xi-x_j}\,\frac{Y(x_j)}{X'(x_j)}\  , 
\label{eq:Lagrinterpola}    \\ 
&\frac{Y(\xi)Z(\xi)}{X(\xi)^2}=1+ \sum_{j=1}^N \frac{\partial}{\partial x_j}
\left(\frac{1}{\xi-x_j}\,\frac{Y(x_j)Z(x_j)}{X'(x_j)^2}\right)\  , \label{eq:Lagrinterpolb} \\ 
&\frac{Y(\xi)Z(\xi)W(\xi)}{X(\xi)^3}=1+ \sum_{j=1}^N \frac{1}{2}
\frac{\partial^2}{\partial x_j^2}\left(\frac{1}{\xi-x_j}\,\frac{Y(x_j)Z(x_j)W(x_j)}{X'(x_j)^3}\right) \label{eq:Lagrinterpolc} 
\end{align}\ese  
where 
\[ X(\xi)=\prod_{j=1}^N (\xi-x_j)\ , \quad Y(\xi)=\prod_{j=1}^N (\xi-y_j)\ , \quad 
Z(\xi)=\prod_{j=1}^N (\xi-z_j)\ , \quad W(\xi)=\prod_{j=1}^N (\xi-w_j)\  , \] 
where the roots $y_j,z_j,w_j$ do not coincide with any of the roots $x_j$, and the latter are assumed distinct. 
Inserting \eqref{eq:fpole} into the relations \eqref{eq:taurels}
and using the expansions \eqref{eq:Lagrinterpol}, leads to 
\bse\label{eq:CMred}
\begin{align}
& \sum_{j=1}^N \left(\frac{1}{x_i-\wt{x}_j}+\frac{1}{x_i-\ut{x}_j}\right)
-\sum_{j=1\atop j\neq i}^N \frac{2}{x_i-x_j}= 0 \quad \Leftrightarrow\quad 
\frac{\partial}{\partial x_i}\left(\frac{\prod_{j=1}^N (x_i-\wt{x}_j)(x_i-\ut{x}_j)}{\prod_{j\neq i} (x_i-x_j)^2} \right)=0\ , \label{eq:CMreda} \\ 
& \partial_\tau x_i= \frac{\prod_{j=1}^N (x_i-\wt{x}_j)(x_i-\ut{x}_j)}{\prod_{j\neq i} (x_i-x_j)^2}\  , \label{eq:CMredb} \\ 
& \frac{\prod_{j=1}^N (\wh{x}_i-\wt{x}_j)\prod_{j\neq i}(\wh{x}_i-\wh{x}_j)}{\prod_{j=1}^N (\wh{x}_i-x_j)(\wh{x}_i-\wh{\wt{x}}_j)}=p-q\  , \label{eq:CMredc} \\ 
& \textrm{for}\quad i=1,\dots, N, \nonumber 
\end{align}\ese 
(and similar relations with $\wt{x}$ and $\wh{x}$, $p$ and $q$ and $\tau$ and $\sigma$ interchanged), together with 
Eq. \eqref{eq:CMred} are the equations of motion of the discrete-time Calogero-Moser model of \cite{NijPeng,Y-KLN}.  
Furthermore, \eqref{eq:usigmatau} inserting \eqref{eq:fpole} and using 
\eqref{eq:Lagrinterpolc} leads to 
\bse
\begin{align} 
& (p-q) \partial_\sigma\partial_\tau x_i= \frac{\partial}{\partial x_i}\left( 
\frac{\uh{\wt{X}}(x_i) \wh{X}(x_i) \ut{X}(x_i) - \ut{\wh{X}}(x_i) \wt{X}(x_i) 
\uh{X}(x_i)}{X'(x_i)^3}\right) \ , \\ 
&  0= \frac{\partial^2}{\partial x_i^2}\left( 
\frac{\uh{\wt{X}}(x_i) \wh{X}(x_i) \ut{X}(x_i) - \ut{\wh{X}}(x_i) \wt{X}(x_i) 
\uh{X}(x_i)}{X'(x_i)^3}\right)\ . 
\end{align} \ese 
Note that the relation \eqref{eq:CMredc} by Lagrange interpolation also leads to
\bse 
\begin{align}
&1+\sum_{l=1}^N \left(\frac{1}{\wh{x}_j-x_l}\,
\frac {\wt{X}(x_l) \wh{X}(x_l)}{X'(x_l)\wh{\wt{X}}(x_l)}
+\frac{1}{\wh{x}_j-\wh{\wt{x}}_l}\, 
\frac {\wt{X}(\wh{\wt{x}}_l) \wh{X}(\wh{\wt{x}}_l)}{X(\wh{\wt{x}}_l)\wh{\wt{X}}'(\wh{\wt{x}}_l)}
 \right)=0\ , \\ 
& p-q+\sum_{l=1}^N \left(\frac{1}{(\wh{x}_j-x_l)^2}\,
\frac {\wt{X}(x_l) \wh{X}(x_l)}{X'(x_l)\wh{\wt{X}}(x_l)}
+\frac{1}{(\wh{x}_j-\wh{\wt{x}}_l)^2}\, 
\frac {\wt{X}(\wh{\wt{x}}_l) \wh{X}(\wh{\wt{x}}_l)}{X(\wh{\wt{x}}_l)\wh{\wt{X}}'(\wh{\wt{x}}_l)}
 \right)=0\ ,   
\end{align}
\ese
(and similar relations with $\wt{x}$ and $\wh{x}$, $p$ and $q$ interchanged). 
 
\section{Direct Linearising transform} 

\newcommand{\ddint}{\iint_{D} d\zeta(l,l^\prime)}

From here on we will assume that the functions $\vf$ of the Lax pairs can be characterised by a 
(spectral type) parameter $k$, while the adjoint functions can be characterised by a parameter 
$k'$, the dependence on which we will denote by an index, namely $\vf_k$ and $\psi_{k'}$ respectively. These spextal parameters can be identified with the lattice parameters 
asociated with specific directions in the multidimensional lattice, and as such 
the Lax pair functions can be expressed in terms of the $\tau$-function as 
\begin{equation}\label{eq:phipsitau}
\vf_k=\frac{T_{-k}f}{f}\,\vf_k^0\ , \quad \psi_{k'}=\frac{T_{k'}^{-1}f}{f}\,\psi_{k'}^0\  . 
\end{equation}
Here the functions $\vf^0_k$ and $\psi^0_{k'}$ are plane-wave factors of the form 
\bse\label{eq:freeref}\begin{align}
& \vf^0_k= \left[\prod_\nu (p_\nu+k)^{n_\nu}\right]\,\exp\left\{k\xi-\sum_\nu \frac{\tau_\nu}{p_\nu+k}  \right\}\ , \\  
& \psi^0_{k'}= \left[\prod_\nu (p_\nu-k')^{-n_\nu}\right]\,\exp\left\{k'\xi+\sum_\nu \frac{\tau_\nu}{p_\nu-k'}  \right\}\ .
\end{align}
\ese 
which are the solutions of the Lax pair relations when the potential function $u=u^0=0$. In 
\eqref{eq:freeref} the variables $n_\nu$ comprise the discrete variables $n$ and $m$ used 
before, associated with the corresponding lattice parameters $p$, $q$ as specific 
choices for the $p_\nu$, while the corresponding continuous variables $\tau_\nu$ are the 
$\tau$ and $\sigma$ variables in the previous sections.    
  
More generally, the direct linearising transform (DLT), \cite{Pasquier1983,SAF1984,NCWQ84,NC90,PP90}, cf. also 
\cite{FN16}, is a very general abstract dressing scheme that builds new solutions 
from given solutions in terms of linear integral equations of a very general form. I will 
briefly summarise the scheme here. 
  
It starts with the observation that, as a consequence of the linear Lax equations for $\vf$ and $\psi$, have the following Wronskian type (or closure) relations:  
\bse \label{eq:Wronskrels}
\begin{align}
& \Delta_q\left((T_p\psi_{k'})\vf_k \right)=\Delta_p\left((T_q\psi_{k'})\vf_k \right)\ , \\ 
& \partial_\xi\left((T_p\psi_{k'})\vf_k \right)=\Delta_p\left(\psi_{k'}\vf_k \right)\ , \\ 
& \partial_\tau\left((T_p\psi_{k'})\vf_k \right)=\Delta_p\left((1+u_\tau)(T_p\psi_{k'})(T_p^{-1}\vf_k) \right)\ . 
\end{align}
In addition to these there are relations of mixed type, namely
\begin{align}\label{eq:Wronskextrarels}
& \partial_\tau\left((T_q\psi_{k'})\vf_k \right)=\Delta_q\left((1+u_\tau)(T_p\psi_{k'})(T_p^{-1}\vf_k) \right)\ , \\ 
& \partial_\sigma \left((1+u_\tau)(T_p\psi_{k'})(T_p^{-1}\vf_k) \right)=
\partial_\tau\left((1+u_\sigma)(T_q\psi_{k'})(T_q^{-1}\vf_k) \right)
\end{align}
\ese 
which are nontrivial, and hold subject to the relations \eqref{eq:uextras} on the 
variable $u$. 

Obviously similar relations to \eqref{eq:Wronskrels} also hold for any other pairs of shift 
variables and parameters and corresponding continuous time-variables in the system.  These relations suggest that there is 
closed 1-form in the space of discrete as well as continuous variables which we symbolically 
write as 
\begin{equation}\label{eq:1form} 
\Omega_{k,k'}=\psi_{k'}\vf_k\,{\rm d}\xi+\sum_\nu \left[(T_{p_\nu}\psi_{k'})\vf_k\, \delta_{p_\nu} + (1+u_{\tau_\nu})(T_{p_\nu}\psi_{k'})
(T_{p_\nu}^{-1}\vf_k)\,{\rm d}\tau_\nu\right]\  , 
\end{equation} 
where $\nu$ is a symbol that labels all the lattice directions in the system, and $\delta_p$ 
is a symbol denoting that the finite contribution, given by the corresponding coefficient of the 1-form is in the 
direction associated with the parameter $p$.  Closedness of this form \eqref{eq:1form} implies that the 
semisum-integral\footnote{This partly discrete- partly continuous line integral may be understood analytically 
in terms of the theory of timescales, cf. e.g. \cite{BohnPet,Bohner}.}  
\begin{equation}\label{eq:G} 
G_{k,k'}=
\mathrlap{\int}\sum_\Gamma \Omega_{k,k'}\ . 
\end{equation}
which combines a sum over discrete path sections as well as a line integral over continuous sections of a mixed 
discrete/continuous path $\Gamma$ in the space of discrete and continuous variables $n_\nu$, $\xi$ and $\tau_\nu$. 
As a consequence of the closure relations \eqref{eq:Wronskrels} the kernel type quantity $G_{k,k'}$ is independent 
of the path $\Gamma$ and only depends on its end-points in the space of these variables. Thus, we have the following 
relations for its derivatives and differences: 
\[ \partial_\xi G_{k,k'}=\psi_{k'}\vf_k\  , \quad \partial_{\tau_\nu}G_{k,k'}=(1+u_{\tau_\nu})(T_p\psi_{k'})T_p^{-1}\vf_k\  , 
\quad \Delta_p G_{k,k'}=(T_p\psi_{k'})\vf_k\  .  \] 
 
Consider now the integral equations
\bse\label{eq:inteqs}
\begin{align}
& \vf_k=\vf_k^0+\ddint \vf_l\,G^0_{k,l'}\ , \\ 
& \psi_{k'}=\psi^0_{k'}+\ddint G_{l,k'} \psi^0_{l'}\ . 
\end{align}
\ese
where the integral is over a domain $D$ in any appropriate space of the parameters $k$ and $k'$ with an arbitrary measure 
${\rm d}\zeta(k,k')$ in that space\footnote{The integral over $k$ and $k'$ 
can be over a continuous domain $D$ or over a set of distinct values of these parameters.}, which is independent of the dynamical variables 
$n_\nu,\tau_\nu$ and $\xi$.  There is no need to specify this integral as long as some assumptions are valid, such as 
that operations like the lattice shifts $T_{p_\nu}$ and derivatives $\partial_\xi,\partial_{\tau_\nu}$ can be moved 
through the integral without problem. 

We can think of the integral equations \eqref{eq:inteqs} as an integral \textit{transform} from an initial solution $(u^0,\vf_k^0,\psi^0_{k'})$ 
of the system of equations (comprising the Lax pairs as well as the nonlinear equations for $u$) to a new 'dressed' solution 
$(u,\vf_k,\psi_{k'})$, where 
\begin{equation} \label{eq:utransf}
u=u^0-\ddint \vf_l\,\psi_{l'}^0\  . 
\end{equation} 
It was shown in \cite{FN16}, cf. also \cite{Nij85LMP} for the general matrix case, that this dressing 
transform has a group property as a consequence of the structure of the kernel function $G_{k,k'}$ 
as a path independent integrated object, cf. \eqref{eq:G}. Furthermore, the kernel itself 
obeys the dressing relation, namely  
\begin{equation} 
\label{eq:Gtransf}
G_{k,k'}=G_{k,k'}^0+\ddint G_{l,k'}\,G^0_{k,l'}\  .  
\end{equation} 

In the case of 'free reference' dressing, i.e. when $u^0=0$ and $\vf^0_k$ and $\psi^0_{k'}$ take the form \eqref{eq:freeref} we have the square eigenfunction expansions, cf. \cite{WCN87}, 
\bse\label{eq:sqeigen}
\begin{align}
& \partial u= \ddint \vf_l\psi_{l'}\,\partial\left(\ln(\vf^0_k\psi^0_{k'})\right)\  ,   \\ 
& Tu-u=\ddint (T\vf_l)\psi_{l'} \left(\frac{T\psi^0_{l'}}{\psi^0_{l'}}- \frac{\vf^0_l}{T\vf^0_l}\right)= 
\ddint \vf_l(T\psi_{l'}) \left(\frac{T\vf^0_{l}}{\vf^0_{l}}- \frac{\psi^0_{l'}}{T\psi^0_{l'}}\right)\ ,    
\end{align}
\ese
where $T$ is any shift operator which commutes with the double integral and $\partial$ is any first order derivative 
which commutes with the double integral. 

In particular, the following fundamental relation holds
\begin{equation}\label{eq:fundtau}
\ddint \frac{(l+l')\vf_l^0\psi^0_{l'}}{(a+l')(a'+l)}(T_{a'}^{-1}T_{-l}f)(T_{l'}^{-1}T_{-a}f) = 
f\,(T_{a'}^{-1}T_{-a}f)-(T_{a'}^{-1}f)(T_{-a}f)\  , 
\end{equation}
which is reminiscent of the bilinear identity that plays a central role in the work 
of the Kyoto school, and from which the hierarchy of Hirota bilinear 
equations can be derived, cf. \cite{MJD2000}.

\section{Discrete Lagrangian 3-form structure} 

We will now describe the Lagrangian structure for the KP system comprising both the 
fully discrete as well as the semi-discrete equations, in terms of Lagrangian difference 
or differential 3-forms. The most general structure of this form would be 
\begin{align}\label{eq:LKPgenmulti}
\mathsf{L}=& \sum_{i<j<k} \mathcal{L}_{p_ip_jp_k} \delta_{p_i}\wedge\delta_{p_j}\wedge\delta_{p_k}+ \mathcal{L}_{(\tau_i)(\tau_j)(\tau_k)}{\mathrm d}\tau_i\wedge{\mathrm d}\tau_j\wedge{\mathrm d}\tau_k \nonumber \\ 
& +\sum_i\sum_{j<k}\mathcal{L}_{(\tau_i)p_jp_k}{\mathrm d}\tau_i\wedge\delta_{p_j}\wedge\delta_{p_k}+\mathcal{L}_{p_i(\tau_j)(\tau_k)}\delta_{p_i}\wedge{\mathrm d}\tau_j\wedge{\mathrm d}\tau_k \nonumber \\  
& +\sum_{i<j} \mathcal{L}_{(\xi)p_ip_j}{\mathrm d}\xi\wedge\delta_{p_i}\wedge\delta_{p_j} 
+ \mathcal{L}_{(\xi)(\tau_i)(\tau_j)}{\mathrm d}\xi\wedge{\mathrm d}\tau_i\wedge{\rm d}\tau_j
\nonumber \\ 
&+\sum_{i,j} \mathcal{L}_{(\xi)(\tau_i)p_j}{\mathrm d}\xi\wedge{\rm d}\tau_i\wedge\delta_{p_j} \ ,
\end{align} 
which is to be integrated over a 3-dimensional hypersurface, containing smooth as well 
as semi-discrete and fully discrete coordinate patches. We will not attempt to describe 
such a weird hypersurface here, along the line as e.g. in \cite{Y-KLN,SleighVerm} for the much 
simpler situation of semi-discrete 1-curves and 2-surfaces, the main property needed 
here in the definition of an integral of the type 
\begin{equation}\label{eq:genaction} 
S[u(\xi,\boldsymbol{n},\boldsymbol{\tau});\mathcal{V}]=
\mathrlap{\int}\sum_{\mathcal{V}} \mathsf{L}\  
\end{equation}
as a functional of both field variables $u(\xi,\boldsymbol{n},\boldsymbol{\tau})$ and 
semi-discrete hypersurfaces $\mathcal{V}$, is the existence of a generalised Stokes' 
theorem that applies to closed semi-discrete 
hypersurfaces. This can be established, but is technically (mostly as a matter of 
introducing the appropriate notations) quite involved, and we leave 
the formulation and proof of such a theorem to a future publication. The main 
point being that, as we want to find the critical point of the action \eqref{eq:genaction} 
for arbitrary hypersurfaces $\mathcal{V}$, it suffices to consider closed hypersurfaces 
and apply Stokes' theorem to the first order variation of $S$. Setting the latter to zero 
as stationarity condition, we arrive at the condition 
\be\label{eq:genEL} \delta{\mathrm d}\mathsf{L}=0 \ , \ee
which form the relevant generalised Euler-Lagrange equations where ${\mathrm d}$ denotes the relevant exterior difference/derivative of the Lagrangian 
3-form (in the language of  
a variational bi-complex with horizontal and vertical derivative operators ${\mathrm d}$ 
and $\delta$, the latter comprising discrete exterior `derivatives' as well as continuous ones). In order to develop the Lagrangian 3-form systematically, we break the process 
of deriving the EL equations and establishing the corresponding closure relation 
\be\label{eq:genclos}    {\mathrm d}\mathsf{L}\big|_{\rm EL}=0\ , \ee  
up in semi-discrete and fully discrete components, and treat them separately. 

In the case we restrict ourself to the fully discrete Lagrangian 
component we have a discrete Lagrangian 3-form
\be\label{eq:LKPmulti}
\mathsf{L}= \sum_{i<j<k} \mathcal{L}_{p_ip_jp_k} \delta_{p_i}\wedge\delta_{p_j}\wedge\delta_{p_k}\  ,
\ee
where the $\mathcal{L}_{p_ip_jp_k}$ each are the Lagrangian components for any three directions indicated by
the lattice parmeters $p_i$, $p_j$, $p_k$ (which could 
comprise the parameters $p$, $q$ and $r$ of the previous sections as 
particular choices) each associated with a discrete lattice variable $n_{p_i}$  playing the role of coordinates for the $i^{\rm th}$
direction in that multidimensional lattice. As before, in \eqref{eq:1form}
the $\delta_p$ denotes a \textit{discrete differential}\footnote{The notation is similar to the
one introduced in \cite{MansHyd}.}, i.e. the formal symbol 
indicating that in the restricted action functional 
\be\label{eq:action}
S[u(\boldsymbol{n});\nu]=\sum_{\nu} \mathsf{L}= \sum_{\nu_{ijk}\in\nu}
\mathcal{L}_{p_ip_jp_k} \delta_{p_i}\wedge\delta_{p_j}\wedge\delta_{p_k} 
\ee
the Lagrangian contributions from all elementary rhomboids $\nu_{ijk}=(\boldsymbol{n},\boldsymbol{n}+\boldsymbol{e}_i,
\boldsymbol{n}+\boldsymbol{e}_j,\boldsymbol{n}+\boldsymbol{e}_k)$ (with elementary displacement vectors $\boldsymbol{e}_i$ (see Figure 
\ref{nu_cube}) 
along the edges in the 3D lattice associated with the lattice parameter $p_i$) are simply
summed up according to their base point $\boldsymbol{n}$ and their orientation.

A Lagrangian for a 3-dimensional system can be defined on an elementary cube (or rhomboid) $\nu_{ijk}(\boldsymbol{n})$, where $\nu_{ijk}(\boldsymbol{n})$ is specified by the position $\boldsymbol{n}=(\boldsymbol{n},\boldsymbol{n}+\boldsymbol{e}_{i},\boldsymbol{n}+\boldsymbol{e}_{j},\boldsymbol{n}+\boldsymbol{e}_{k})$ of one of its vertices in the lattice and the lattice directions given by the base vectors $\boldsymbol{e}_{i},\boldsymbol{e}_{j},\boldsymbol{e}_{k}$, 
as in Figure \ref{nu_cube}. Thus, 
\begin{gather*}\label{3form} 
\mathcal{L}_{ijk}(\boldsymbol{n})= L(u(\boldsymbol{n}),u(\boldsymbol{n}+\boldsymbol{e}_{i}),u(\boldsymbol{n}+\boldsymbol{e}_{j}),u(\boldsymbol{n}+\boldsymbol{e}_{k}),
u(\boldsymbol{n}+\boldsymbol{e}_{i}+\boldsymbol{e}_{j}), \\ 
\qquad\qquad\qquad  u(\boldsymbol{n}+\boldsymbol{e}_{j}+\boldsymbol{e}_{k}),u(\boldsymbol{n}+\boldsymbol{e}_{i}+\boldsymbol{e}_{k}),u(\boldsymbol{n}+\boldsymbol{e}_{i}+\boldsymbol{e}_{j}+\boldsymbol{e}_{k});p_i,p_j,p_k)\ , 
\end{gather*}
as well as on lattice parameters $p_i,p_j,p_k$,  (where where for simplicity of notation we have replaced the 
indices $p_i,p_j,p_k$ simply by $i,j,k$ and where $\Delta_i=T_{p_i}-{\rm id}$ is the forward difference operator in the direction labeled by 
$p_i$). 

According to the derivation proposed in \cite{LobbNij2018}, the set of multiform EL equations is
obtained by considering the smallest closed 3D hypersurface, which is the boundary of a 4D hypercube (hyper-rhomboid) $\mathcal{R}$,
the action of which is given by
\be\label{eq:boxL}
S[u(\boldsymbol{n});\partial\mathcal{R}]=:(\square\mathcal{L})_{ijkl}=
\Delta_{i}\mathcal{L}_{jkl}- \Delta_j\mathcal{L}_{kli}+\Delta_k\mathcal{L}_{lij}\ 
-\Delta_l\mathcal{L}_{ijk}\ .
\ee
Embedding the system in 4 dimension, the smallest closed 3-dimensional 
hypersurface is the boundary of the tessaract, consisting of 8 3-dimensional faces, each of which is a rhomboid. 
Because of the symmetry, we need only take derivatives with respect to $u,T_{i}u$, $T_{i}T_{j}u$, $T_{i}T_{j}T_{k}u$ and $T_{i}T_{j}T_{k}T_{l}u$, and the other equations will follow by cyclic permutation of the lattice directions.

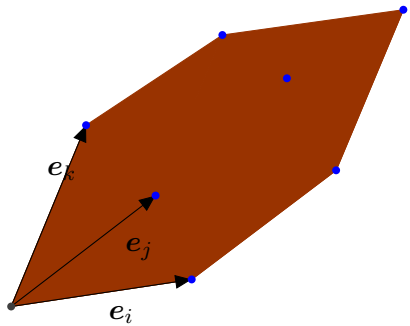
\begin{figure}[h]
\begin{center}
\definecolor{zzttqq}{rgb}{0.6,0.2,0}
\definecolor{qqqqff}{rgb}{0,0,1}
\definecolor{uququq}{rgb}{0.25,0.25,0.25}
\begin{tikzpicture}[line cap=round,line join=round,>=triangle 45,x=.6cm,y=.6cm]
\fill[color=zzttqq,fill=zzttqq,fill opacity=0.1] (0,0) -- (3.2,2.46) -- (7.2,3.02) -- (4,0.6) -- cycle;
\fill[color=zzttqq,fill=zzttqq,fill opacity=0.1] (0,0) -- (1.66,4.02) -- (4.68,6.02) -- (3.2,2.46) -- cycle;
\fill[color=zzttqq,fill=zzttqq,fill opacity=0.1] (1.66,4.02) -- (6.11,5.06) -- (4,0.6) -- (0,0) -- cycle;
\fill[color=zzttqq,fill=zzttqq,fill opacity=0.1] (7.2,3.02) -- (8.69,6.58) -- (4.68,6.02) -- (3.2,2.46) -- cycle;
\fill[color=zzttqq,fill=zzttqq,fill opacity=0.1] (8.69,6.58) -- (6.11,5.06) -- (4,0.6) -- (7.2,3.02) -- cycle;
\draw [color=zzttqq] (0,0)-- (3.2,2.46);
\draw [color=zzttqq] (3.2,2.46)-- (7.2,3.02);
\draw [color=zzttqq] (7.2,3.02)-- (4,0.6);
\draw [color=zzttqq] (4,0.6)-- (0,0);
\draw [color=zzttqq] (0,0)-- (1.66,4.02);
\draw [color=zzttqq] (1.66,4.02)-- (4.68,6.02);
\draw [color=zzttqq] (4.68,6.02)-- (3.2,2.46);
\draw [color=zzttqq] (3.2,2.46)-- (0,0);
\draw [color=zzttqq] (1.66,4.02)-- (6.11,5.06);
\draw [color=zzttqq] (6.11,5.06)-- (4,0.6);
\draw [color=zzttqq] (4,0.6)-- (0,0);
\draw [color=zzttqq] (0,0)-- (1.66,4.02);
\draw [color=zzttqq] (7.2,3.02)-- (8.69,6.58);
\draw [color=zzttqq] (8.69,6.58)-- (4.68,6.02);
\draw [color=zzttqq] (4.68,6.02)-- (3.2,2.46);
\draw [color=zzttqq] (3.2,2.46)-- (7.2,3.02);
\draw [color=zzttqq] (8.69,6.58)-- (6.11,5.06);
\draw [color=zzttqq] (6.11,5.06)-- (4,0.6);
\draw [color=zzttqq] (4,0.6)-- (7.2,3.02);
\draw [color=zzttqq] (7.2,3.02)-- (8.69,6.58);
\draw [->] (0,0) -- (4,0.6);
\draw [->] (0,0) -- (3.2,2.46);
\draw [->] (0,0) -- (1.66,4.02);
\draw (2.31,1.74) node[anchor=north west] {$ \boldsymbol{e}_j $};

\draw (1.94,.24) node[anchor=north west] {$ \boldsymbol{e}_i $};
\draw (0.57,3.43) node[anchor=north west] {$ \boldsymbol{e}_k $};
\begin{scriptsize}
\fill [color=uququq] (0,0) circle (1.5pt);
\draw[color=uququq] (0.24,0.42) node {$ $};
\fill [color=qqqqff] (3.2,2.46) circle (1.5pt);
\fill [color=qqqqff] (7.2,3.02) circle (1.5pt);
\fill [color=qqqqff] (4,0.6) circle (1.5pt);
\fill [color=qqqqff] (1.66,4.02) circle (1.5pt);
\fill [color=qqqqff] (4.68,6.02) circle (1.5pt);
\fill [color=qqqqff] (6.11,5.06) circle (1.5pt);
\fill [color=qqqqff] (8.69,6.58) circle (1.5pt);
\end{scriptsize}
\end{tikzpicture}
\caption{Elementary oriented rhomboid.}
\label{nu_cube}
\end{center}
\end{figure}

The variations with regard to the vertices of the closed 
hyper-rhomboid lead to the following set of equations, cf. \cite{LobbNij2018}, 
\begin{subequations}\label{3deqn} 
\begin{gather}
 0  =  \frac{\partial}{\partial u}\biggl(-\mathcal{L}_{ijk}+\mathcal{L}_{jkl}-\mathcal{L}_{kli}+\mathcal{L}_{lij}\biggr),\label{3deqna}\\
 0  =  \frac{\partial}{\partial T_{i}u}\biggl(-\mathcal{L}_{ijk}-T_{i}\mathcal{L}_{jkl}-\mathcal{L}_{kli}+\mathcal{L}_{lij}\biggr),\label{3deqnb}\\
 0  =  \frac{\partial}{\partial T_{i}T_{j}u}\biggl(-\mathcal{L}_{ijk}-T_{i}\mathcal{L}_{jkl}+T_{j}\mathcal{L}_{kli}+\mathcal{L}_{lij}\biggr),\label{3deqnc}\\
 0  =  \frac{\partial}{\partial T_{i}T_{j}T_{k}u}\biggl(-\mathcal{L}_{ijk}-T_{i}\mathcal{L}_{jkl}+T_{j}\mathcal{L}_{kli}-T_{k}\mathcal{L}_{lij}\biggr),\label{3deqnd}\\
 0  =  \frac{\partial}{\partial T_{i}T_{j}T_{k}T_{l}u}\biggl(T_{l}\mathcal{L}_{ijk}-T_{i}\mathcal{L}_{jkl}+T_{j}\mathcal{L}_{kli}-T_{k}\mathcal{L}_{lij}\biggr),\label{3deqne}
\end{gather}
\end{subequations}
along with the equivalent shifted versions
\begin{subequations}\label{3deqnshift}
\begin{gather}
 0  =  \frac{\partial}{\partial u}\biggl(-\mathcal{L}_{ijk}+\mathcal{L}_{jkl}-\mathcal{L}_{kli}+\mathcal{L}_{lij}\biggr),\\
 0  =  \frac{\partial}{\partial u}\biggl(-T_{i}^{-1}\mathcal{L}_{ijk}-\mathcal{L}_{jkl}-T_{i}^{-1}\mathcal{L}_{kli}+T_{i}^{-1}\mathcal{L}_{lij}\biggr),\\
 0  =  \frac{\partial}{\partial u}\biggl(-T_{i}^{-1}T_{j}^{-1}\mathcal{L}_{ijk}-T_{j}^{-1}\mathcal{L}_{jkl}+T_{i}^{-1}\mathcal{L}_{kli}+T_{i}^{-1}T_{j}^{-1}\mathcal{L}_{lij}\biggr),\\
 0  =  \frac{\partial}{\partial u}\biggl(-T_{i}^{-1}T_{j}^{-1}T_{k}^{-1}\mathcal{L}_{ijk}-T_{j}^{-1}T_{k}^{-1}\mathcal{L}_{jkl}+T_{i}^{-1}T_{k}^{-1}\mathcal{L}_{kli}-T_{i}^{-1}T_{j}^{-1}\mathcal{L}_{lij}\biggr),\\
 0  =  \frac{\partial}{\partial u}\biggl(T_{i}^{-1}T_{j}^{-1}T_{k}^{-1}\mathcal{L}_{ijk}-T_{j}^{-1}T_{k}^{-1}T_{l}^{-1}\mathcal{L}_{jkl}+T_{i}^{-1}T_{k}^{-1}T_{l}^{-1}\mathcal{L}_{kli}-T_{i}^{-1}T_{j}^{-1}T_{l}^{-1}\mathcal{L}_{lij}\biggr).
\end{gather}
\end{subequations}
The system of `corner equations' comprising \eqref{3deqn}, \eqref{3deqnshift} supplemented with the 3-form closure relation 
\begin{equation}\label{3dclosure} 
 \Delta_{l}\mathcal{L}_{ijk}-\Delta_{i}\mathcal{L}_{jkl}+\Delta_{j}\mathcal{L}_{kli}-\Delta_{k}\mathcal{L}_{lij}=0 
\end{equation}  
forms the fundamental system of EL equations of a 3D integrable fully discrete system, when we consider it as a system which not only represents the 
relevant equations of motion, but also as a system of equations for 
the (possibly unknown) Lagrangian components themselves. However, in 
applying the multiform scheme to the KP system, we will make a short-cut 
by regarding the double-zero structure of ${\mathrm d}\mathsf{L}$, following \cite{SNC22,NijZhang23,RichVerm}, cf. also \cite{Sleigh-thesis},  
 which provides a short-cut to the relevant EL system in the sense of 
the variational bi-complex, which implies the closure relation 
\eqref{3dclosure}.

\section{Lagrangian 3-forms for (semi-)discrete KP systems} 

A main aim of this paper is to describe the Lagrangian 3-form structure 
for the fully discrete KP system \eqref{eq:potDDDKP}, whereas the only instance so far of a Lagrangian structure for a fully discrete 3-dimensional integrable equation was given in \cite{LNQ}, 
namely for the Hirota bilinear KP equation, cf. also 
\cite{BolPetSur2015} for a geometric interpretation of that result. 
In order to arrive at that end, we will take an indirect route, namely 
via a Lagrange structure for the semi-discrete KP stystem \eqref{eq:DddKP}. 
Furthermore, we will establish the double-zero structure of the 
fully discrete Lagrangians 3-form.

\subsection{Lagrangian structure of the semi-discrete KP} 

A Lagrangian for eq. \eqref{eq:DddKP} is given by 
\begin{equation}\label{eq:DddLagr}
\mathcal{L}_{(\xi)pq}=v_\xi \ln(p-q+\wh{u}-\wt{u})+v(\wh{\wt{u}}+u-\wh{u}-\wt{u})\  , 
\end{equation} 
where $u$ and $v$ are independently variable fields. The conventional 
Euler-Lagrange (EL) equations (as a semi-discrete field theory) are given by 
\bse\begin{align}
& \frac{\delta \mathcal{L}_{(\xi)pq}}{\delta v}=-\partial_\xi\ln(p-q+\wh{u}-\wt{u})+\wh{\wt{u}}+u-\wh{u}-\wt{u}=0\ , \label{eq:DddELa}\\ 
& \frac{\delta \mathcal{L}_{(\xi)pq}}{\delta u}= \frac{\uh{v}_\xi}{p-q+u-\wt{\uh{u}}}-\frac{\ut{v}_\xi}{p-q+\wh{\ut{u}}-u}+\uh{\ut{v}}+v 
-\uh{v}-\ut{v}=0\ , \label{eq:DddELb}
\end{align}\ese
where \eqref{eq:DddELa} is \eqref{eq:DddKP}, and \eqref{eq:DddELb} a direct consequence provided we have the solution $v=\wh{u}-\wt{u}+{\rm constant}$. Note that the latter is only valid 'on-shell' (i.e. on solutions of the EL equations), as 
posing this off-shell (i.e., as a reduction on the variational system) would trivialize the Lagrangian \eqref{eq:DddLagr}. 
Lagrangians for the DD$\Delta$ equations \eqref{eq:DDdKP} are given by 
\bse\label{eq:DDdLagr}\begin{align} 
&\mathcal{L}_{(\xi)(\tau)p} = V_\xi\ln(1+u_\tau)+V(\wt{u}+\ut{u}-2u)\  , 
\label{eq:DDdLagra} \\ 
&\mathcal{L}_{(\xi)(\sigma)q} = W_\xi\ln(1+u_\sigma)+W(\wh{u}+\uh{u}-2u)\  , \label{eq:DDdLagrb}
\end{align}\ese 
the (conventional) EL equations of which yield  
\bse\begin{align}
& \frac{\delta \mathcal{L}_{(\xi)(\tau)p}}{\delta V}=-\partial_\xi\ln(1+u_{\tau}) 
+\wt{u}+\ut{u}-2u=0\ , \\ 
& \frac{\delta \mathcal{L}_{(\xi)(\tau)p}}{\delta u}= -\partial_\tau\left( 
\frac{V_\xi}{1+u_\tau}\right) +\wt{V}+\ut{V}-2V=0\ .
\end{align}\ese
(and similar relations arising from \eqref{eq:DDdLagrb}). Thus, varying 
w.r.t. $V$ (resp. $W$) yield the equations \eqref{eq:DddKP}, while 
the variations with 
respect to $u$ yield EL equations that are satisfied by $V=1+u_\tau$ and $W=1+u_\sigma$ respectively on-shell. 

\paragraph{\bf Remark:} 
We note that there is an (on-shell) closure relation 
\begin{align}\label{eq:VWclos} 
\left(\partial_\sigma \mathcal{L}_{(\xi)(\tau)p}
-\partial_\tau\mathcal{L}_{(\xi)(\sigma)q}\right)\Big{|}_{\rm EL}=\partial_\xi \left[u_{\tau\sigma}\ln\left(\frac{1+u_\tau}{1+u_\sigma} \right) \right]\  , 
\end{align}
where the expression within the straight brackets on the r.h.s. can be considered as a `null Lagrangian' (one whose conventional 
EL equations are trivial). 
The closure relation \eqref{eq:VWclos} absurdly suggests that there might be a Lagrangian multi-form 
\be\label{eq:sigtau3form} 
\mathsf{L}= \mathcal{L}_{(\xi)(\tau)p}{\mathrm d}\xi\wedge{\mathrm d}\tau\wedge\delta_p+ 
\mathcal{L}_{(\xi)(\sigma)q}{\mathrm d}\xi\wedge{\mathrm d}\sigma\wedge\delta_q+\mathcal{L}_{(\tau)(\sigma)}{\mathrm d}\tau\wedge{\mathrm d}\sigma \ , 
\ee 
which would be a mixed 3-form and 2-form. Whether such an object would  
make sense should perhaps not be rejected off-hand (the latter term 
representing perhaps some boundary term) but for now we postpone the 
investigation of the corresponding multiform structure. Instead, below we 
will develop the multiform structure associated with \eqref{eq:DddLagr}. 

\subsection{Lagrangians for the Lax representation} 

Before discussing the multiform structure for \eqref{eq:DddKP}, we make a 
side-step by showing that there are Lagrangians also for the corresponding 
Lax represenations of \eqref{eq:DddKP} and \eqref{eq:DDdKP}, namely by 
using the square eigenfunction representations \eqref{eq:sqeigen}. It is 
here where the multiform structure becomes crucial, as a Lax representation 
requires the posing of an overdetermined pair of equations, rather than 
a single equation (the Lax equation). In this spirit a variational 
description of Lax \textit{pairs} was first given in \cite{SNC20}.  

The individual components of the Lax representation, namely \eqref{eq:vfLaxa} and \eqref{eq:DdvfLaxa} and their respective adjoints 
\eqref{eq:psiLaxa} and \eqref{eq:DdpsiLaxa}, can be obtained from the 
following Lagrangians respectively,   
\bse\label{eq:LaxLagrs}
\begin{align}
& \mathsf{L}_p = \tfrac{1}{2}(\wt{u}-u)^2 +p(u-\wt{u})
+\ddint (l+l') \left[ \wt{\psi}_{l'}\partial_\xi\vf_l-\psi_{l'}\vf_l+\wt{\psi}_{l'}(p+u-\wt{u})\vf_l\right] \ , \\ 
&\mathsf{L}_{(\tau)}= u\partial_\tau\wt{u}+\ddint (l+l')(2p+l-l') 
\left[ (1+u_\tau)\wt{\psi}_{l'}\,\ut{\vf}_l+\wt{\psi}_{l'}\partial_\tau\vf_l \right] \ , 
\end{align}
\ese 
and similar Lagrangians associated with the other lattice direction and with the variable $\sigma$ and parameter $q$ instead of $\tau$ and $p$.  
The verification that these Lagrangians yield the correct Lax pair equations utilises an application of the 
formulae \eqref{eq:sqeigen} in the case that $T=T_p$ and $\partial=\partial_\tau$. In fact, for those choices of 
shift and differential operator, we get by using also \eqref{eq:freeref} the square eigenfunction expansion 
\[ \wt{u}-u=\ddint (l+l') \wt{\psi}_{l'}\vf_l\ , \quad \wt{u}-\ut{u}= \ddint (l+l')(2p+l-l') \wt{\psi}_{l'}\ut{\vf}_l\  . \]  
and 
\[ u_\xi=\ddint (l+l')\psi_{l'}\vf_l\  , \quad   u_\tau=\ddint \frac{l+l'}{(p+l)(p-l')} \psi_{l'}\vf_l\  .   \]
Thus, variations of \eqref{eq:LaxLagrs} yields 
\bse\label{eq:LaxELs1} 
\begin{align}
& \frac{\delta \mathsf{L}_p}{\delta\psi_{k'}}= \int_{C_{k'}} {\rm d}\lambda_{k'}(l) 
\left[-\vf_l+\partial_\xi\ut{\vf}_l+(p+\ut{u}-u)\ut{\vf}_l  \right]=0 \  , \\ 
& \frac{\delta \mathsf{L}_p}{\delta\vf_{k}}= \int_{C'_{k}} {\rm d}\lambda'_{k}(l') 
\left[-\psi_{l'}-\partial_\xi\wt{\psi}_{l'}+(p+u-\wt{u})\wt{\psi}_{l'}  \right]=0 \  , \\ 
& \frac{\delta \mathsf{L}_p}{\delta u}= 2u-\wt{u}-\ut{u} + 
\ddint (l+l') \left(\wt{\psi}_{l'}\vf_l-\psi_{l'}\ut{\vf}_l\right)=0\ ,    
\end{align}
\ese 
where the latter holds true by virtue of the square eigenfunction expansion, whilst the 
former two are weak forms of the Lax relations. Note that we have assumed that the functional 
derivative with regard to the $k$- and $k'$-dependent quantity within the double integral reduces the double  
integral to a single integral over (boundary) curves $C_{k'}$ and $C'_k$ in the space of 
$k$ respectively $k'$ variables\footnote{This would obviously require an analytic 
justification, for specific integration regions $D$ and measure ${\rm d}
\zeta$, but that is beyond the scope of this paper. Here we will just 
assume that this can be done for specific choices of the integrations.}. 

Variations of $\mathsf{L}_{(\tau)}$ yield the following EL equations 
\bse\label{eq:LaxELs2} 
\begin{align}
& \frac{\delta \mathsf{L}_{(\tau)}}{\delta\psi_{k'}}= \int_{C_{k'}} {\rm d}\lambda_{k'}(l)\, (2p+l-l') 
\left[(1+\ut{u}_\tau)\ut{\ut{\vf}}_l+\partial_\tau\ut{\vf}_l  \right]=0 \  , \\ 
& \frac{\delta \mathsf{L}_{(\tau)}}{\delta\vf_{k}}= \int_{C'_{k}} {\rm d}\lambda'_{k}(l')\, (2p+l-l') 
\left[(1+\wt{u}_\tau)\wt{\wt{\psi}}_{l'}-\partial_\tau\wt{\psi}_{l'} \right]=0 \  , \\ 
& \frac{\delta \mathsf{L}_{(\tau)}}{\delta u}= \partial_\tau(\wt{u}-\ut{u}) -\partial_\tau  
\ddint (l+l') (2p+l-l')\wt{\psi}_{l'}\ut{\vf}_l=0\ ,    
\end{align}
\ese 
which are in some sense weaker forms of the Lax pair \eqref{eq:DdpsiLax} and where the latter holds true by virtue of the square eigenfunction expansion. The main question is how to fit the Lagrangians $\mathsf{L}_p$ 
and $\mathsf{L}_q$, or alternatively the Lagrangians $\mathsf{L}_{\tau}$ 
and $\mathsf{L}_{\sigma}$, into a multiform so that the multiform EL 
equation produce the respective pairs. This is still an open problem, but 
the following observation may provide a lead. 

\paragraph{\bf Remark1:} 
The eigenfunction Lagrangians \eqref{eq:LaxLagrs} obey the following 
closure relation 
\[ \left(\Delta_p \mathsf{L}_q-\Delta_q\mathsf{L}_p\right)\,
\Big{|}_{\rm EL}=\partial_\xi(p-q+\wh{u}-\wt{u})\  . \]  
This suggests a Lagrangian 2-form
\[ \mathsf{\bf L}=\mathsf{L}_p \delta_p\wedge {\rm d}\xi +
\mathsf{L}_q\delta_q\wedge{\rm d}\xi+\mathsf{L}_{pq}\delta_p\wedge \delta_q\  ,     \] 
where $\mathsf{L}_{pq}=p-q+\wh{u}-\wt{u}$ is a \textit{null Lagrangian} which has trivial EL equations. However, this closure relation does not seem yet either  
sufficient or appropriate to provide the 
full variational structure for the Lax representation, and we intend to come back to 
this matter in a future publication. 

\paragraph{\bf Remark2:}  
Although the Lax equations come out in what seems 
to be a weaker (i.e., integrated) form, they are really are not in any sense weaker as for instance in the case of \eqref{eq:LaxELs1}  
one could introduce the functions 
\[ \Phi_{k'}=\int_{C_{k'}} {\rm d}\lambda_{k'}(l)\vf_{l}\ , \quad 
\Psi_k= \int_{C'_{k}} {\rm d}\lambda'_{k}(l')\psi_{l'}\  ,  \] 
in terms of which we recover the original Lax pairs both the adjoint and the direct forms. In the case of \eqref{eq:LaxELs2} it is more subtle, 
but essentially the variational form of the Lax is as strong as the 
original form. 
 
The Lax equations for the semi-discrete KP system are thus retrievable from well-chosen Lagrangians, but 
what we need is a mechanism to obtain these equations simultaneously as Lax pairs from variations of 
a single structure. For this we need the notion of Lagrangian multiforms, which we will address now in the case of Lagrangains of the form 
\eqref{eq:DddLagr}.

\subsection{From semi-discrete to fully discrete KP Lagrangian} 

We will now first establish the Lagrangian multiform structure for the semi-discrete KP system. Surprisingly this 
will provide us with a Lagrangian structure for the fully discrete KP equation \eqref{eq:dddKP}. 
Considering three copies of the Lagrangian \eqref{eq:DddLagr} in three lattice dimensions (in addition to the continuous one associated 
with $\xi$), and observing 
that \textit{on-shell} the variable $v$ can be identified with $v\dot{=}p-q+\wh{u}-\wt{u}$, we note that the 
latter quantity is actually a `2-form valued object'. Thus, for each pair of lattice directions we need to 
introduce a separate variable $v$, and consequently in three dimensions the Lagrangian components can be set as: 
\bse\label{eq:DddKPLagrcomps} 
\begin{align}
& \mathcal{L}_{(\xi)pq}=v_\xi \ln(p-q+\wh{u}-\wt{u})+v(\wh{\wt{u}}+u-\wh{u}-\wt{u})\  , \\ 
& \mathcal{L}_{(\xi)qr}=w_\xi \ln(q-r+\wb{u}-\wh{u})+w(\wh{\wb{u}}+u-\wh{u}-\wb{u})\  , \\ 
& \mathcal{L}_{(\xi)rp}=z_\xi \ln(r-p+\wt{u}-\wb{u})+z(\wt{\wb{u}}+u-\wt{u}-\wb{u})\  , 
\end{align}\ese 
where on-shell the additional 2-form variables $w$ and $z$ can be identified with 
\[ w=q-r+\wb{u}-\wh{u}\ , \quad z=r-p+\wt{u}-\wb{u}\  . \] 
These are constrained by the conditions 
\begin{equation}\label{eq:vwzconstrs} 
v+w+z=0\ , \quad \wb{v}+\wt{w}+\wh{z}=0\  .  
\end{equation} 
Furthermore, we note that the potential lattice KP equation \eqref{eq:dddKP} can be written as 
\begin{equation}\label{eq:vwzKP}
\frac{\wb{v}}{v}=\frac{\wt{w}}{w}=\frac{\wh{z}}{z}\  .  
\end{equation} 
The system comprising \eqref{eq:vwzconstrs} and \eqref{eq:vwzKP} was 
considered in the context of the non-Abelian case by \cite{Nimmo}, while 
in \cite{FN16},  we showed how by elimination 
this system 
gives rise to a scalar \textit{non-potential} lattice KP equation for each 
of the fields $v$, $w$ or $z$. For instance, in terms of $v$ 
the resulting non-potential lattice KP is the 10-point equation 
\begin{align}\label{eq:nonpotKP} 
& \wh{\wt{v}}\wh{\wh{v}}\left[\left(\wh{\wt{\wb{v}}}-\wh{\wh{\wt{v}}}\right)\wh{v} \wt{\wb{v}}+\wh{\wh{\wt{v}}}\wt{v}\wh{\wb{v}} \right]
+\wh{\wt{v}}\wh{\wh{\wb{v}}} \left[\wh{\wb{v}}\wt{v}\left(\wh{\wb{v}}-\wh{\wh{v}}\right)+\wh{\wh{v}}\wh{v}\wt{\wb{v}} \right]   \\ 
& = \wh{\wb{v}}\wt{\wb{v}}\left[ \wh{\wt{v}} \wh{\wh{\wb{v}}}-\wh{\wh{v}} \wh{\wt{\wb{v}}}\right](\wb{v}-\wt{v})
+\wh{\wb{v}}\wh{\wh{v}}\wh{\wt{\wb{v}}}\wt{\wb{v}}\wh{v}+\wh{\wb{v}}\wh{\wt{v}}^2\wt{v} \wh{\wh{\wb{v}}}\ . 
\end{align}
The key indication to 
establish a multiform structure has been to establish a \textit{closure relation} between 
the Lagrangian components of a MDC integrable system, \cite{LobbNij2009}. In the present case this is given by 
the following theorem. 

\begin{theorem}
Between the Lagrangian components \eqref{eq:DddKPLagrcomps} the following 3-form closure 
relation holds true on-shell (i.e. for solutions of the equations \eqref{eq:vwzconstrs} 
and \eqref{eq:vwzKP}): 
\bse\label{eq:KPclos}
\begin{equation}\label{eq:closure} 
\Delta_r \mathcal{L}_{(\xi)pq}+\Delta_p \mathcal{L}_{(\xi)qr}+\Delta_q \mathcal{L}_{(\xi)rp}=
\partial_\xi \mathcal{L}_{pqr}\  , 
\end{equation}  
where 
\begin{equation}\label{eq:dddKPLagr}
\mathcal{L}_{pqr}=(\wb{v}-v)\ln v+(\wt{w}-w)\ln w+(\wh{z}-z)\ln z\  . 
\end{equation} 
\ese 
Furthermore, the (conventional) Euler-Lagrange equations for the constrained Lagrangian component 
\begin{equation} 
\mathcal{L}^{\rm c}_{pqr}=\mathcal{L}_{pqr}+\lambda(v+w+z)+\mu(\wb{v}+\wt{w}+\wh{z})\  , 
\end{equation} 
where $\lambda$ and $\mu$ are Lagrange multipliers, are satisfied on solutions of the lattice  
potential KP equation \eqref{eq:vwzKP}. 
\end{theorem}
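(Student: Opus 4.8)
The plan is to establish the two assertions of the theorem separately: first the closure relation \eqref{eq:closure}--\eqref{eq:dddKPLagr}, and then the Euler--Lagrange claim for the constrained Lagrangian $\mathcal{L}^{\rm c}_{pqr}$. For the closure relation, I would start from the right-hand side of \eqref{eq:DddKPLagrcomps} and apply the forward difference operators directly. Each term $\Delta_r\mathcal{L}_{(\xi)pq}$ splits into a $\partial_\xi$-piece coming from $(\wb{v_\xi})\ln(\wb{\,\cdot\,})-v_\xi\ln(\cdot)$ and a purely algebraic piece. The key observation to exploit is the on-shell identification $v\dot{=}p-q+\wh{u}-\wt{u}$, $w\dot{=}q-r+\wb{u}-\wh{u}$, $z\dot{=}r-p+\wt{u}-\wb{u}$, under which the argument of each logarithm in $\mathcal{L}_{(\xi)pq}$, after an $r$-shift, becomes $\wb{v}$, and the prefactor $\wh{\wt{u}}+u-\wh{u}-\wt{u}$ equals $-v_\xi/\cdots$ only modulo \eqref{eq:DddKP}; rather than use that, I would keep everything in terms of $v,w,z$ using \eqref{eq:vwzconstrs}. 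The aim is to massage $\Delta_r\mathcal{L}_{(\xi)pq}+\Delta_p\mathcal{L}_{(\xi)qr}+\Delta_q\mathcal{L}_{(\xi)rp}$ into the form $\partial_\xi[(\wb{v}-v)\ln v+(\wt{w}-w)\ln w+(\wh{z}-z)\ln z]$, which requires showing that the leftover non-$\xi$-derivative terms cancel identically once \eqref{eq:vwzKP} (equivalently the lattice KP equation \eqref{eq:dddKP}) and the constraints \eqref{eq:vwzconstrs} are imposed. Concretely, after carrying out the $\xi$-differentiation on the proposed right-hand side, the logarithmic derivatives produce terms like $(\wb{v}-v)\,v_\xi/v$, and I would check these match the $v_\xi\ln(\cdot)$-contributions from the left side using $\wb{v}/v=\wt{w}/w=\wh{z}/z$; the algebraic residue should then be a sum of $v\cdot(\text{potential differences})$ that telescopes to zero by \eqref{eq:vwzconstrs} and its shifts.

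For the second assertion, I would compute the ordinary Euler--Lagrange equations of $\mathcal{L}^{\rm c}_{pqr}=\mathcal{L}_{pqr}+\lambda(v+w+z)+\mu(\wb{v}+\wt{w}+\wh{z})$ treating $v,w,z$ (and $\lambda,\mu$) as independent fields living on the lattice. Varying with respect to $\lambda$ and $\mu$ returns exactly the constraints \eqref{eq:vwzconstrs}. Varying with respect to $v$ (and cyclically $w,z$) gives, collecting the terms in $\mathcal{L}_{pqr}$ that involve $v$ at the point $\boldsymbol n$ and at the back-shifted point (since $\wb{v}$ appears), an equation of the schematic form
\begin{equation}
\frac{\partial}{\partial v}\Big[(\wb{v}-v)\ln v\Big]+\ub{\phantom{x}}\!\!\Big[\tfrac{\partial}{\partial v}\big((\wb{v}-v)\ln v\big)\Big]+\lambda+\ub{\mu}=0\ ,\nonumber
\end{equation}
i.e. $-\ln v+(\wb v-v)/v-1+\ub{(\text{something})}+\lambda+\ub\mu=0$, together with its cyclic analogues. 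The plan is then to show that this system is solved identically once \eqref{eq:vwzKP} holds: the idea is that on solutions $\wb v/v=\wt w/w=\wh z/z=:\rho$ (a function independent of the particular field), so the $v$-, $w$-, $z$-variations become the \emph{same} equation up to the cyclic relabelling, and the Lagrange multipliers $\lambda,\mu$ can be chosen (determined by $\rho$ and $\ub\rho$) to absorb the remaining mismatch. I would verify this by substituting $\wb v=\rho v$, etc., back into the three varied equations and checking consistency, which fixes $\lambda$ and $\mu$ in terms of $\ln\rho$ and shifted quantities.

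The main obstacle I anticipate is bookkeeping in the closure computation: the shifts $\Delta_r,\Delta_p,\Delta_q$ acting on Lagrangians each of which already contains two different shifted copies of $u$ (hence, after the on-shell substitution, of $v,w,z$) generate a proliferation of terms at vertices of a $2\times2\times2$ block, and assembling them into a clean $\partial_\xi(\cdots)$ requires repeatedly invoking both constraints in \eqref{eq:vwzconstrs}, their $\xi$-derivatives, and the multi-ratio form \eqref{eq:vwzKP} in just the right combinations. A secondary subtlety is the status of the identification $v\dot=p-q+\wh u-\wt u$: it is valid only on-shell, so I must be careful that every manipulation used in proving the \emph{on-shell} closure relation is itself an on-shell manipulation, and in particular that I never differentiate the constraint in a way that presupposes it off-shell. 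Once the term-by-term cancellation pattern is identified, however, the rest should be routine algebra of rational and logarithmic expressions in $v,w,z$ and their nearest-neighbour shifts.
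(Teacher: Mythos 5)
Your strategy for the second assertion (vary $v,w,z$ and the multipliers in $\mathcal{L}^{\rm c}_{pqr}$, obtain $\ln(\ub{v}/v)+\wb{v}/v+\lambda-1+\ub{\mu}=0$ and its cyclic analogues, then check these are solved by $\wb{v}/v=\wt{w}/w=\wh{z}/z$ with a suitable choice of $\lambda,\mu$) is exactly the paper's argument and is fine. The problem lies in your plan for the closure relation \eqref{eq:closure}, specifically in the sentence where you decline to use \eqref{eq:DddKP} and propose instead to ``keep everything in terms of $v,w,z$ using \eqref{eq:vwzconstrs}.'' This cannot work. Each component of \eqref{eq:DddKPLagrcomps} contains the term $v(\wh{\wt{u}}+u-\wh{u}-\wt{u})$, and the combination $\wh{\wt{u}}+u-\wh{u}-\wt{u}$ is a \emph{second} difference of $u$ that is not expressible through $v,w,z$ and their shifts (those are first differences only). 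Moreover, the right-hand side $\partial_\xi\mathcal{L}_{pqr}$ produces terms proportional to $v_\xi/v$, i.e.\ $\xi$-derivatives, whereas $\Delta_r\bigl[v(\wh{\wt{u}}+u-\wh{u}-\wt{u})\bigr]$ contains no $\xi$-derivatives at all; the only bridge between the two is precisely the semi-discrete equation \eqref{eq:DddKP}, $\partial_\xi\ln(p-q+\wh{u}-\wt{u})=\wh{\wt{u}}+u-\wh{u}-\wt{u}$, which is the Euler--Lagrange equation of $\mathcal{L}_{(\xi)pq}$ with respect to $v$ and hence a legitimate on-shell relation (the theorem's wording, which mentions only \eqref{eq:vwzconstrs} and \eqref{eq:vwzKP}, understates what ``on-shell'' must include here — do not let that mislead you). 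The paper's proof uses it immediately to write
\begin{equation*}
\left.\mathcal{L}_{(\xi)pq}\right|_{\rm EL}=v_\xi\ln v+v\,\frac{v_\xi}{v}=\partial_\xi\bigl(v\ln v\bigr)\ ,
\end{equation*}
after which the left-hand side of \eqref{eq:closure} is manifestly $\partial_\xi$ of $\Delta_r(v\ln v)+\Delta_p(w\ln w)+\Delta_q(z\ln z)$, and the reduction to \eqref{eq:dddKPLagr} follows from $\wb{v}\ln\wb{v}=\wb{v}\ln v+\wb{v}\ln(\wb{v}/v)$ together with \eqref{eq:vwzKP} and the shifted constraint $\wb{v}+\wt{w}+\wh{z}=0$. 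Your claim that the leftover algebraic residue ``telescopes to zero by \eqref{eq:vwzconstrs} and its shifts'' is therefore unsubstantiated: the residue in question is a sum of lattice-KP-type polynomials in $u$ (closely related to \eqref{eq:potDDDKP}) on one side against genuine $\xi$-derivative terms on the other, and no amount of telescoping with the linear constraints alone can equate them.

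If you restore the missing step — substitute \eqref{eq:DddKP} into the second term of each Lagrangian component before applying the difference operators — the rest of your outline (matching the $\ln$-contributions via $\wb{v}/v=\wt{w}/w=\wh{z}/z$ and killing the remainder with $v+w+z=0$ and $\wb{v}+\wt{w}+\wh{z}=0$) goes through and coincides with the paper's computation. As written, however, the first half of the proof has a genuine gap at its central step.
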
 

\begin{proof}
First we note that on solutions of the EL equations, i.e. on solutions of \eqref{eq:DddKP} 
we have 
\[ \left.\mathcal{L}_{(\xi)pq}\right|_{\rm EL}=\partial_\xi\left(v\ln(p-q+\wh{u}-\wt{u}) \right)\ . \] 
Thus by writing out the left-hand side of \eqref{eq:closure} on-shell we get 
a total derivative $\partial_\xi$ of a sum of terms that can be rewritten in the form 
of \eqref{eq:dddKPLagr}.   
Furthermore, the EL equations for $\mathcal{L}_{pqr}$, varying with respect to $v,w$ and $z$ 
independently, are
\bse\label{eq:vwzEL}
\begin{align}
& \frac{\delta \mathcal{L}^c}{\delta v}=\ln\left(\ub{v}/v\right)+\frac{\wb{v}}{v}+\lambda-1+\ub{\mu}=0\ , \\ 
& \frac{\delta \mathcal{L}^c}{\delta w}=\ln\left(\ut{w}/w\right)+\frac{\wt{w}}{w}+\lambda-1+\ut{\mu}=0\ , \\ 
& \frac{\delta \mathcal{L}^c}{\delta z}=\ln\left(\uh{z}/z\right)+\frac{\wh{z}}{z}+\lambda-1+\uh{\mu}=0\ , 
\end{align}\ese
together with the constraints \eqref{eq:vwzconstrs}. A natural solution to these EL equations 
is 
\[ \frac{\wb{v}}{v}=\frac{\wt{w}}{w}=\frac{\wh{z}}{z}=e^{-\mu}=1-\lambda\ ,  \] 
which supplemented with the constraints \eqref{eq:vwzconstrs} yield copies of the non-potential lattice KP equation.  
\end{proof} 

Theorem 6.1 implies that the semi-discrete Lagrangian 3-form 
\begin{equation}\label{eq:Lagr3form}
\mathsf{L}=\mathcal{L}_{pqr}\,\delta_p\wedge\delta_q\wedge\delta_r + \mathcal{L}_{(\xi)pq}\,{\mathrm d}\xi\wedge\delta_p\wedge\delta_q + \mathcal{L}_{(\xi)qr}\,{\mathrm d}\xi\wedge\delta_q\wedge\delta_r 
+ \mathcal{L}_{(\xi)rp}\,{\mathrm d}\xi\wedge\delta_r\wedge\delta_p  
\end{equation} 
is closed on solutions of the system of semi-discrete and fully discrete KP equations. 

\paragraph{Remark:} 
A theory of 
semi-discrete Lagrangian 2-forms was recently proposed in \cite{SleighVerm}, referring to earlier work 
\cite{MansHyd} on a general theory of \textit{difference forms}, cf. also \cite{Sleigh-thesis}. In \cite{SleighVerm} a description was given of 
semi-discrete 2-manifold, while in \cite{Y-KLN} (in the special case of 
Calogero-Moser type systems) we gave a description of semi-discrete 
curves. Although the general case of semi-discrete Lagrangian multiforms was outlined in the Appendix of \cite{SleighVerm}, no examples were given 
of a system with more than one discrete variable. Thus, the semi-
discrete KP case with the semi-discrete 3-form \eqref{eq:Lagr3form}, 
seems to be the first concrete example of a higher-dimensional semi-discrete multiform with more than one discrete 
direction involved in corresponding the semi-discrete multi-time manifold. 
\vspace{.2cm} 

However, this is not the end of the story. In fact, the treatment above also gives rise to 
a fully discrete 3-form structure comprising the Lagrangian components of the form 
\eqref{eq:dddKPLagr} embedded in a multidimensional lattice of at least four lattice 
directions. This is given by the following:

\begin{theorem} 
The fully discrete Lagrangian 3-form 
\begin{equation}\label{eq:Lagr3dddform}
\mathsf{L}=\mathcal{L}_{pqr}\,\delta_p\wedge\delta_q\wedge\delta_r + \mathcal{L}_{qrs}\,
\delta_q\wedge\delta_r\wedge\delta_s + \mathcal{L}_{rsp}\,\delta_r\wedge\delta_s\wedge 
\delta_p + \mathcal{L}_{spq}\,\delta_s\wedge\delta_p\wedge\delta_q 
\end{equation} 
where the components are given by \eqref{eq:dddKPLagr} is closed on solutions of the 
fully discrete lattice KP equation \eqref{eq:dddKP}, i.e. it obeys the following 
3-form closure relation on solutions: 
\begin{equation}\label{eq:dddclosure} 
(\square\mathcal{L})_{pqrs}:=\Delta_s \mathcal{L}_{pqr}-\Delta_p \mathcal{L}_{qrs}+\Delta_q \mathcal{L}_{rsp}- 
\Delta_r \mathcal{L}_{spq}=0\  . 
\end{equation} 
\end{theorem}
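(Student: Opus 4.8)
The plan is to exploit the on-shell simplification of the Lagrangian components already used in the proof of Theorem 6.1, reducing each component $\mathcal{L}_{abc}$ to a sum of elementary difference-exact terms built from a single density $\phi(x):=x\ln|x|$, and then to let the outer differences in $(\square\mathcal{L})_{pqrs}$ act and cancel the resulting double differences in pairs.

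First I would record the off-shell identity $(\wb{v}-v)\ln v=\Delta_r\phi(v)-(\wb{v})\ln(\wb{v}/v)$ together with its cyclic analogues for $w$ and $z$. On solutions of \eqref{eq:vwzKP} the three ratios $\wb{v}/v=\wt{w}/w=\wh{z}/z$ coincide with a common factor $\Gamma$, so the collected remainder equals $-(\wb{v}+\wt{w}+\wh{z})\ln\Gamma$, which vanishes by the second constraint in \eqref{eq:vwzconstrs}. Hence on-shell
\[ \mathcal{L}_{pqr}=\Delta_r\phi(v)+\Delta_p\phi(w)+\Delta_q\phi(z)\ ,\qquad \phi(x):=x\ln|x|\ , \]
and, by exactly the same computation on the relevant elementary cube, each of $\mathcal{L}_{qrs},\mathcal{L}_{rsp},\mathcal{L}_{spq}$ likewise reduces to the cyclic sum $\Delta_c\phi(v_{ab})$ over its three 2-form variables. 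It is convenient here to label the six 2-form variables living on the coordinate 2-planes of the four directions $p,q,r,s$ as $v_{ab}=(p_a-p_b)+T_bu-T_au$, so that $v_{ba}=-v_{ab}$, the constraints \eqref{eq:vwzconstrs} hold identically, and each $v_{ab}$ occurs in precisely two of the four components $\mathcal{L}_{pqr},\dots,\mathcal{L}_{spq}$.

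Next I would substitute these on-shell forms into $(\square\mathcal{L})_{pqrs}=\Delta_s\mathcal{L}_{pqr}-\Delta_p\mathcal{L}_{qrs}+\Delta_q\mathcal{L}_{rsp}-\Delta_r\mathcal{L}_{spq}$, turning every term into $\pm\,\Delta_d\Delta_c\phi(v_{ab})$, and collect the contributions plane by plane. For the four ``edge'' planes $pq,qr,rs,sp$ (each shared by two cyclically adjacent components, which enter $\square$ with opposite signs) the plane variable occurs with the same orientation in both components, and the two double differences agree because $\Delta_c\Delta_d=\Delta_d\Delta_c$, so the pair cancels. For the two ``diagonal'' planes $pr,qs$ (each shared by the two components that enter $\square$ with the same sign) the plane variable occurs in the two components with opposite orientations, e.g.\ $v_{rp}$ in $\mathcal{L}_{pqr}$ versus $v_{pr}=-v_{rp}$ in $\mathcal{L}_{rsp}$, so the combined contribution is $\Delta_c\Delta_d\bigl(\phi(v_{rp})+\phi(-v_{rp})\bigr)=0$ because $\phi$ is odd. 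Adding up, $(\square\mathcal{L})_{pqrs}=0$ on solutions, which is \eqref{eq:dddclosure}.

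The routine-but-delicate part is the bookkeeping: one must track, for each of the four cyclic relabellings $(p,q,r)\mapsto(q,r,s),(r,s,p),(s,p,q)$, exactly which $v_{ab}$ (and with which sign) each term of $\mathcal{L}_{abc}$ carries and which lattice shift it is hit by, since a single mismatched orientation or shift would spoil a cancellation. The one genuinely conceptual point, rather than mere bookkeeping, is that the cancellation on the diagonal planes forces the convention $\ln\equiv\ln|\cdot|$ (equivalently, $\phi$ odd): this is what makes $(\wb v-v)\ln v$ transform with the correct overall sign under $v\mapsto-v$, and it should be stated explicitly. A less economical route avoiding the $\phi$-reduction would expand $(\square\mathcal{L})_{pqrs}$ directly with the difference product rule $\Delta_d(fg)=(\Delta_df)(T_dg)+f\,\Delta_dg$ and show that the coefficient of $\ln v_{ab}$ and of each of its shifts vanishes on using \eqref{eq:vwzKP}; one could also differentiate in $\xi$ and invoke Theorem 6.1 together with closedness of the semi-discrete 3-form \eqref{eq:Lagr3form}, but that only yields $\xi$-independence of $(\square\mathcal{L})_{pqrs}$ and still needs a supplementary evaluation.
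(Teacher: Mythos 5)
Your proof is correct, but it takes a genuinely different route from the paper's. The paper establishes \eqref{eq:dddclosure} by brute-force expansion of the left-hand side using the form \eqref{eq:dddKPLagrcomp}: the resulting terms are sorted into structural types (doubly shifted, unshifted, and mixed), and each type is shown to cancel using the skew symmetry $v_{p_ip_j}=-v_{p_jp_i}$, the linear constraints, and the lattice KP equation in the ratio form $T_{p_k}v_{p_ip_j}/v_{p_ip_j}=T_{p_i}v_{p_kp_j}/v_{p_kp_j}$ --- essentially the ``less economical route'' you mention at the end. Your argument instead isolates the on-shell identity $\mathcal{L}_{pqr}=\Delta_r\phi(v_{pq})+\Delta_p\phi(v_{qr})+\Delta_q\phi(v_{rp})$ with $\phi(x)=x\ln|x|$, which checks out: the remainder $-\bigl(\wb{v}\ln(\wb{v}/v)+\wt{w}\ln(\wt{w}/w)+\wh{z}\ln(\wh{z}/z)\bigr)$ collapses to $-(\wb{v}+\wt{w}+\wh{z})\ln\Gamma=0$ precisely by \eqref{eq:vwzKP} and \eqref{eq:vwzconstrs}. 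After that, the four-dimensional closure is pure combinatorics: commutativity of the difference operators kills the four ``edge-plane'' pairs, and oddness of $\phi$ kills the two ``diagonal-plane'' pairs; I verified your sign and orientation bookkeeping for all six coordinate planes. This buys a cleaner separation of roles --- the KP equation is invoked only once, in the three-dimensional reduction of each component, and the tessaract cancellation is then automatic --- and it makes explicit the branch convention $\ln\equiv\ln|\cdot|$ that the paper leaves implicit but in fact also needs. What your shortcut does not deliver, and what the paper's direct expansion is set up for, is the off-shell refinement: the subsequent double-zero theorem requires keeping precisely the terms that your on-shell reduction discards, so your argument reaches the closure relation more transparently but is not a stepping stone to that stronger result.
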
 
\begin{proof}
The proof is by direct computation. It takes into account that each component of the 3-form 
\eqref{eq:Lagr3dddform} is defined on a rhomboid in higher dimension, each face of which 
is associated with a variable $v_{p_ip_j}$ (where $v_{pq}=v$, $v_{qr}=w$ and 
$v_{rp}=z$ in the ad-hoc notation used earlier), which on solutions of the EL equations can be identified with 
$p_i-p_j+T_{p_j}u-T_{p_i}u$, and hence subject to the relations 
\[ v_{p_ip_j}=-v_{p_jp_i}\  , \quad v_{p_ip_j}+v_{p_jp_k}+v_{p_kp_i}=0\ , \quad 
\frac{T_{p_k}v_{p_ip_j}}{v_{p_ip_j}}= \frac{T_{p_i}v_{p_kp_j}}{v_{p_kp_j}}\  , 
\]
the latter relation being the lattice potential KP equation on each rhomboid.  
Expanding the left-hand side of \eqref{eq:dddclosure} there are various types of terms and their 
permutations of indices. The terms of the type $T_{p_i}\left[\left(T_{p_j}v_{p_kp_l}\right)\ln(v_{p_kp_l})\right]$ 
all cancel against each other using the above identification for the $v_{p_kp_l}$ and the 
lattice potential KP equation. The terms of the type $v_{p_kp_l}\ln(v_{p_kp_l})$ cancel out by the skew symmetry 
among the cyclic permutations of the indices. The mixed shifted terms of the types 
$T_{p_i}\left[v_{p_kp_l}\ln(v_{p_kp_l})\right]$ and $\left(T_{p_j}v_{p_kp_l}\right)\ln(v_{p_kp_l})$ 
conspire together to cancel, using the lattice potential KP and the form of the quantities $v_{p_ip_j}$, where there 
are two types of cyclic permutations, even and odd, that provide separate cancellations of the contributions 
arising in the closure.    
\end{proof} 

There is an even stronger result that pertains to the double-zero 
phenomenon of the exterior derivative of the Lagrange multiform, following 
the ideas in \cite{SNC22,RichVerm,NijZhang23}, which is stated as follows. 

\begin{theorem} 
The discrete exterior derivative of the Lagrangian 3-form \eqref{eq:Lagr3dddform} possesses the double-zero property, meaning that \textit{off-shell} it can be written as a sum of 
factors, each of which vanish on the solutions of the EL equations. 
\end{theorem}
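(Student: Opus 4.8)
The plan is to prove the double-zero property by computing the off-shell four-form coefficient $(\square\mathcal{L})_{pqrs}$ of \eqref{eq:dddclosure} explicitly and reorganising it into a manifestly quadratic expression in the Euler--Lagrange quantities, thereby refining the cancellation bookkeeping already carried out in the proof of Theorem 6.2, where only the \emph{on-shell} vanishing was tracked. Throughout, each component $\mathcal{L}_{p_ip_jp_k}$ is regarded as a function of the face variables $v_{p_ip_j},v_{p_jp_k},v_{p_kp_i}$ (with the ad-hoc identifications $v=v_{pq}$, $w=v_{qr}$, $z=v_{rp}$ and the shift conventions $\wb{\cdot}=T_{p_k}$, $\wt{\cdot}=T_{p_i}$, $\wh{\cdot}=T_{p_j}$ on the corresponding pair), so that $\mathcal{L}_{p_ip_jp_k}=\sum_{\rm cyc}(T_{p_k}v_{p_ip_j}-v_{p_ip_j})\ln v_{p_ip_j}$, and each $v_{p_ip_j}$ occurs in exactly two of the four components in \eqref{eq:Lagr3dddform}.

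First I would set up the zero-factors. Collecting the corner equations generated by $\delta\,{\mathrm d}\mathsf{L}=0$ according to the face variable being varied, and using the computation already done in the proof of Theorem 6.1, one obtains for each pair a logarithmic Euler--Lagrange expression $\mathsf{E}_{p_ip_j}=\ln\!\big(T_{p_k}^{-1}v_{p_ip_j}/v_{p_ip_j}\big)+T_{p_k}v_{p_ip_j}/v_{p_ip_j}+(\text{the analogous contribution from the second component containing }v_{p_ip_j})+(\text{Lagrange-multiplier terms})$, which vanishes precisely on the lattice potential KP equation \eqref{eq:vwzKP}; alongside these I would record the kinematic cocycle expressions $\mathsf{C}_{p_ip_jp_k}=v_{p_ip_j}+v_{p_jp_k}+v_{p_kp_i}$ and their shifts, which also vanish on-shell by \eqref{eq:vwzconstrs}. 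The $\mathsf{E}$'s and $\mathsf{C}$'s are the admissible vanishing factors.

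Next I would use the elementary identity $(T_{p_k}v-v)\ln v=\Delta_{p_k}(v\ln v)-(T_{p_k}v)\ln(T_{p_k}v/v)$ to split every term of every component into a total-difference piece plus a ``log-ratio'' piece. Substituting into $(\square\mathcal{L})_{pqrs}$, the total-difference pieces recombine (since the $\Delta$'s commute) and drop out by the skew-symmetry of the alternating sum in \eqref{eq:dddclosure} --- this is exactly the cancellation of the $v\ln v$-type terms noted in the proof of Theorem 6.2. What survives is a sum of products $(\text{shift of }v)\times\ln(\text{ratio of shifts of }v)$. On each rhomboid the lattice KP equation makes the three ratios $T_{p_k}v_{p_ip_j}/v_{p_ip_j}$ coincide, so I would add and subtract, on each rhomboid, one such common reference ratio $\rho$: the $\rho$-dependent remainders cancel between the two orientation classes (even and odd cyclic permutations of the indices, precisely the two types of cancellation flagged in the proof of Theorem 6.2), while the differences $\ln(T_{p_k}v_{p_ip_j}/v_{p_ip_j})-\rho$ are linear combinations of the $\mathsf{E}$'s (and, after using $v_{p_ip_j}=p_i-p_j+T_{p_j}u-T_{p_i}u$ to eliminate the explicit shifts, of the $\mathsf{C}$'s), and the prefactors $(\text{shift of }v)$ likewise differ from a solution value by $\mathsf{E}$- and $\mathsf{C}$-proportional terms. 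Multiplying out leaves $(\square\mathcal{L})_{pqrs}$ as a sum of products each of which carries at least two vanishing factors.

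The main obstacle is this final matching step: one must verify that, after the first-order (linear-in-$\mathsf{E}$, linear-in-$\mathsf{C}$) contributions have all been arranged to cancel against each other --- which is forced, but not transparent, by the on-shell closure of Theorem 6.2 --- the coefficients of the genuinely quadratic remainders line up so that the whole thing is visibly $\sum(\text{EL})\cdot(\text{EL})$. Here the concrete form of the components is essential: because $\partial^2(x\ln x)=1/x$, the second-order Taylor remainder of each $\mathcal{L}_{p_ip_jp_k}$ about a solution is governed by $1/v_{p_ip_j}$, which pins down the quadratic form explicitly; treating the even and odd orientation classes separately, as in the proof of Theorem 6.2, is what makes the two types of quadratic contribution close up. Once $(\square\mathcal{L})_{pqrs}$ is exhibited as such a double zero, the closure relation \eqref{eq:dddclosure} follows by restricting to solutions, and the full multiform Euler--Lagrange system \eqref{3deqn}--\eqref{3dclosure} follows by taking a single variational derivative of ${\mathrm d}\mathsf{L}$.
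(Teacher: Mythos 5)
There is a genuine gap at the decisive step. Your preparatory moves do mirror the paper's computation: rewriting $(T_{p_k}v-v)\ln v=\Delta_{p_k}(v\ln v)-(T_{p_k}v)\ln(T_{p_k}v/v)$, letting the total differences cancel in the alternating sum, and being left with a sum of terms of the form $(\text{shift of }v)\times\ln(\text{ratio of shifts of }v)$ is exactly the intermediate stage the paper reaches. But the heart of the theorem is what happens next, and there your argument breaks down. You assert that the cancellation of the first-order (linear-in-EL) contributions is ``forced, but not transparent, by the on-shell closure of Theorem 6.2.'' It is not forced. On-shell closure only says that $(\square\mathsf{L})_{pqrs}$ lies in the ideal generated by the equations of motion (a single zero); the double-zero property is the strictly stronger claim that it lies in the \emph{square} of that ideal, i.e.\ that a decomposition exists with no linear terms at all. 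That stronger claim is the entire content of the theorem and must be exhibited explicitly; it cannot be inferred from Theorem 6.2. The paper does this by introducing the polynomial octahedron quantity $\Omega_{pqr}=(T_rv_{pq})v_{qr}-v_{pq}\,T_pv_{qr}$ and using the exact identity $\frac{(T_rv_{pq})v_{qr}}{v_{pq}\,T_pv_{qr}}=1+\frac{\Omega_{pqr}}{v_{pq}\,T_pv_{qr}}$, so that every surviving logarithm is manifestly of first order in an $\Omega$; the residual $\Omega$-independent terms are then shown to cancel identically using only the linear relation $T_rv_{pq}=T_qv_{pr}+T_pv_{rq}$, and the second factor in each product is itself checked to vanish on-shell. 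None of this bookkeeping is replaced by an appeal to closure.

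Your choice of vanishing factors is also problematic. A double-zero expansion must be an exact \emph{off-shell} algebraic identity in the fields, but your reference ratio $\rho$ (``one common ratio per rhomboid'') is only defined on solutions, where the three ratios coincide; off-shell there is no canonical $\rho$, so the decomposition ``$\ln(T_{p_k}v_{p_ip_j}/v_{p_ip_j})-\rho$'' is not well defined as an identity. Likewise, the expressions $\mathsf{E}_{p_ip_j}$ you take from the proof of Theorem 6.1 contain the Lagrange multipliers $\lambda,\mu$ and vanish only for particular multiplier values, so they are not functions of $u$ alone; and once the $v_{p_ip_j}$ are expressed through the single field $u$ as $p_i-p_j+T_{p_j}u-T_{p_i}u$ (which is how the off-shell computation must be read), the cocycle $\mathsf{C}_{p_ip_jp_k}$ is \emph{identically} zero, so it cannot serve as a nontrivial on-shell-vanishing factor. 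The correct factors are the $\Omega$'s, i.e.\ the lattice potential KP polynomials themselves, together with the differences of cross-ratio expressions that the paper records in \eqref{eq:dzeroexp}. A Taylor expansion of $x\ln x$ about a background solution, as you propose, would at best give a perturbative statement near that solution, not the required exact identity.
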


\begin{proof}
The proof is by direct computation. Writing the Lagrangian components as 
\begin{align}\label{eq:dddKPLagrcomp}
\mathcal{L}_{pqr}=(T_rv_{pq}-v_{pq})\ln( v_{pq})+(T_pv_{qr}-v_{qr})\ln(v_{qr}) 
+(T_qv_{rp}-v_{rp})\ln(v_{rp})\  ,  
\end{align} 
and only using the property 
\be\label{eq:prop}  T_rv_{pq}=T_qv_{pr}+T_p v_{rq}\  , \ee  
by direct computation we find that \textit{off-shell}, ie. as an identity 
in the $u$-variables, 
\begin{align*}
(\square\mathsf{L})_{pqrs} = &  
(T_qT_sv_{pr})T_s\ln\left(\frac{v_{pq}\,T_qv_{rp}}{v_{rp}\,T_rv_{pq}}\right) 
+(T_pT_sv_{rq})T_s\ln\left(\frac{v_{pq}\,T_pv_{qr}}{v_{qr}\,T_rv_{pq}}\right) \\ 
& + (T_pT_rv_{qs})T_r\ln\left(\frac{v_{pq}\,T_pv_{qs}}{v_{qs}\,T_sv_{pq}}\right) 
+(T_qT_rv_{sp})T_r\ln\left(\frac{v_{pq}\,T_qv_{sp}}{v_{sp}\,T_sv_{pq}}\right) \\ 
& + (T_pT_qv_{rs})T_p\ln\left(\frac{v_{qr}\,T_qv_{rs}}{v_{rs}\,T_sv_{qr}}\right) 
+(T_pT_rv_{sq})T_p\ln\left(\frac{v_{qr}\,T_rv_{sq}}{v_{sq}\,T_sv_{qr}}\right) \\ 
& + (T_qT_rv_{qs})T_q\ln\left(\frac{v_{pr}\,T_rv_{sp}}{v_{sp}\,T_sv_{pr}}\right) 
+(T_pT_qv_{sr})T_q\ln\left(\frac{v_{pr}\,T_pv_{rs}}{v_{rs}\,T_sv_{pr}}\right) \\  
& + (T_qv_{ps})\ln\left(\frac{v_{pq}\,T_qv_{sp}}{v_{sp}\,T_sv_{pq}}\right) 
+(T_pv_{sq})\ln\left(\frac{v_{pq}\,T_pv_{sq}}{v_{qs}\,T_sv_{pq}}\right) \\ 
& + (T_qv_{rs})\ln\left(\frac{v_{rs}\,T_sv_{qr}}{v_{qr}\,T_qv_{rs}}\right) 
+(T_rv_{sq})\ln\left(\frac{v_{sq}\,T_sv_{qr}}{v_{qr}\,T_rv_{sq}}\right) \\ 
& + (T_rv_{qp})\ln\left(\frac{v_{pq}\,T_pv_{qr}}{v_{qr}\,T_rv_{pq}}\right) 
+(T_qv_{pr})\ln\left(\frac{v_{rp}\,T_pv_{qr}}{v_{qr}\,T_qv_{pr}}\right) \\ 
& + (T_sv_{rp})\ln\left(\frac{v_{pr}\,T_pv_{rs}}{v_{rs}\,T_sv_{rp}}\right) 
+(T_rv_{ps})\ln\left(\frac{v_{sp}\,T_pv_{rs}}{v_{rs}\,T_rv_{sp}}\right)\ ,  
\end{align*} 
where we can rewrite these terms using the KP-octahedron quantity\footnote{Octahedral 
objects like $\Omega_{pqr}$ were introduced in connection with the quadrilateral 
lattice equations in the ABS list, cf. \cite{BollPetrSur}, cf. also the more recent treatment 
\cite{RichVerm}.}  
\be\label{eq:KPOct} 
\Omega_{pqr}:= (T_rv_{pq})v_{qr}-v_{pq}T_pv_{qr}= (p-q+T_rT_qu-T_rT_pu)(T_ru-r)+{\rm cycl.} \ , 
\ee
(which obviously vanishes on solutions of \eqref{eq:potDDDKP}, but here we define it as 
a polynomial in the $u$-variables and its shifts) by writing the expressions within 
the logarithms as 
\[ \frac{(T_rv_{pq})v_{qr}}{v_{pq}T_pv_{qr}}=1+ \frac{\Omega_{pqr}}{v_{pq}T_pv_{qr}}\ , \] 
and expanding the logarithms, we find the double-zero expansion
\begin{align}
(\square\mathsf{L})_{pqrs} = & 
(T_s\Omega_{rpq})T_s\left( \frac{T_qv_{pr}}{v_{rp} T_rv_{pq}}
     -\frac{T_pv_{rq}}{v_{rq}T_rv_{qp}}+ \mathcal{O}(T_s\Omega_{rpq}) \right) \nn \\  
& +(T_r\Omega_{sqp})T_r\left( \frac{T_pv_{qs}}{v_{sq} T_sv_{qp}}
     -\frac{T_qv_{sp}}{v_{sp}T_sv_{pq}}+ \mathcal{O}(T_r\Omega_{sqp}) \right) \nn \\ 
& +(T_p\Omega_{srq})T_p\left( \frac{T_qv_{rs}}{v_{sr} T_sv_{rq}}
     -\frac{T_rv_{sq}}{v_{sq}T_sv_{qr}}+ \mathcal{O}(T_p\Omega_{srq}) \right) \nn \\
& +(T_q\Omega_{spr})T_q\left( \frac{T_rv_{ps}}{v_{sp} T_sv_{pr}}
     -\frac{T_pv_{sr}}{v_{sr}T_sv_{rp}}+ \mathcal{O}(T_q\Omega_{spr}) \right)\ ,  \label{eq:dzeroexp} 
\end{align} 
where, using \eqref{eq:prop} all the terms of lower order in the shifts cancel. 
The factors in each term of \eqref{eq:dzeroexp} vanish on solutions of the lattice 
potential KP equation $\Omega_{\cdot,\cdot,\cdot}=0$. In fact the 
second factor in each term the contributions of first and higher order 
in $\Omega$ vanish on solutions of the potential KP, while 
the terms of the form 
\[ \frac{T_qv_{pr}}{v_{rp} T_rv_{pq}}-\frac{T_pv_{rq}}{v_{rq}T_rv_{qp}} \] 
in the second factors equally vanish on solutions of the lattice 
KP equation.  
\end{proof}

When considering the generalised EL equations 
$\delta (\square\mathsf{L})_{pqrs}=0$ one has to take into account the identity 
\[ T_s\Omega_{pqr}-T_p\Omega_{qrs}+T_q\Omega_{rsp}-T_r\Omega_{spq}=0 \ , \] 
as an identity in the variables $u$ and its shifts, and thus the different 
prefactors of the form $T\Omega$ 
in the double-zero expansion are not all independent. This leads to a slightly 
weaker form of the EL equations coming from the computation   
\[ \delta\square\mathsf{L}= \sum \left[ (\delta T\Omega)\left( \frac{Tv}{vTv}+\cdots \right)+(T\Omega)\delta\left( \frac{Tv}{vTv}+\cdots \right)\right]=0\  ,   \] 
while splitting the contributions up into functionally independent components. 
In conclusion, the double-zero structure provides the relevant EL equations for the 
potential KP system, and thus establishes the multiform structure for the fully discrete KP system.

 \section{Discussion} 
 
One of the initial motivations of this paper was to complete the known link between the Calogero-
Moser systems and the KP system, a link first discovered by Krichever for the conventional 
CM system and the standard KP equation, \cite{Krich}, and later 
extended to the discrete-time case in \cite{NijPeng} based on a pole-reduction of the semi-discrete KP equation  
\eqref{eq:DddKP}. The corresponding Lagrangian 1-form 
structure was explored in \cite{Y-KLN}, using the Calogero-Moser system as a laboratory to 
establish the EL structure of the corresponding Lagrangian 1-forms, but this was 
partly hampered by the absence of a 
Lagrangian structure for \eqref{eq:DddKP}, which disallowed us to perform the pole-reduction on the Lagrangian level. That lacuna is hopefully now repaired in the present paper, but the actual pole-reduction from 3-forms to 1-forms remains to be achieved. 

More generally, the reduction problem from 3-forms to, say, 2-forms is something that 
needs to be explored, not only for reductions from the KP system to the corresponding 
2D quad equation (H1 of the ABS list), 
but the more subtle and highly nontrivial reduction to higher-order 2D lattice 
systems like the lattice Boussinesq system (cf. e.g. \cite{NijZhang23}) or the 
lattice Gel'fand-Dikii hierarchy. 

Another issue that emerges is the further study of the generating PDE 
system \eqref{eq:uvPDE} for the KP hierarchy, in particular finding the 
Lagrange structure.  In the linearised case the corresponding PDE can be 
written entirely in terms of $u$ alone, and reads 
\begin{align}\label{eq:linKPPDE} 
 (p-q)^2 u_{\xi\xi\sigma\sigma\tau\tau}-(p-q)^4 u_{\sigma\sigma\tau\tau} + 
 2(p-q)^2 \left(u_{\sigma\tau\tau\xi}+u_{\sigma\sigma\tau\xi} \right) 
 = u_{\sigma\sigma\xi\xi}+u_{\tau\tau\xi\xi}-2u_{\sigma\tau\xi\xi}\  ,  
\end{align} 
which possesses a Lagrangian 
\begin{align}
\mathcal{L}= \tfrac{1}{2}(p-q)^2 u_{\sigma\tau\xi}^2+\tfrac{1}{2}(p-q)^4 u_{\sigma\tau}^2 
-(p-q)^2 u_{\sigma\tau}\left(u_{\tau\xi}+u_{\sigma\xi} \right) 
 + \tfrac{1}{2}\left(u_{\sigma\xi}^2+u_{\tau\xi}^2-2u_{\sigma\xi}u_{\tau\xi} \right)\ . 
\end{align} 
How the latter Lagrangian fits into a 3-form structure remains for now an open problem. Ideally, 
one would like to find another generating PDE system in which the special variable 
$\xi$ is replaced by yet another Miwa variable, say $\rho$, which would complete the picture. Such a PDE system, in terms only of Miwa variables, does exist, but not for the 
dependent variable $u$; it is the Darboux-KP system the Lagrangian 3-form structure of 
which was found in \cite{Nij2023}. Intriguingly that Lagrangian structure is in fact a 
curious kind of infinite-dimensional Chern-Simons theory, as was found in \cite{FMNR2023}. 
The latter finding begs for a further exploration of the mysterious connection between 
topological field theories and Lagrangian multiform systems.

\subsection*{Acknowledgements}

The author has benefited from useful discussions with V. Caudrelier, L. Peng, J. 
Richardson,  D. Sleigh, M. Vermeeren and special thanks to C. Zhang and D-J. Zhang 
for their support. He is grateful for the 
hospitality at the Mathematics Department of Shanghai University where this work 
partly written. The work was partly supported by the Foreign Expert Program of the Ministry of Sciences and Technology of China, grant no: G2023013065L.

\label{lastpage}
\end{document}